\def\GF{\mathrm{GF}}
\def\S{\mathrm{S}}
\def\w{\mathrm{w}}
\theoremstyle{plain}
\newtheorem{theorem}{Theorem}
\newtheorem{definition}[theorem]{Definition}
\theoremstyle{definition}
\newtheorem{proposition}[theorem]{Proposition}
\newtheorem{remark}[theorem]{Remark}
\newtheorem{corollary}[theorem]{Corollary}
\newtheorem{example}[theorem]{Example}
\begin{document}
\title{Higher Order Differentiation over Finite Fields with
Applications to Generalising the Cube Attack}

\author{Ana~S\u{a}l\u{a}gean\thanks{A. S\u{a}l\u{a}gean and R. Winter are with the Department of Computer Science,   Loughborough University,
    Loughborough, UK,
    email: \{A.M.Salagean, R.Winter\}@lboro.ac.uk}  \and Matei~Mandache-S\u{a}l\u{a}gean\thanks{M. Mandache-S\u{a}l\u{a}gean is with Trinity
College, University of Cambridge, UK,  email: {mfm41@cam.ac.uk}}
 \and Richard~Winter\footnotemark[1] \and Raphael~C.-W.~Phan
\thanks{R. Phan is with the Faculty of Engineering, Multimedia University, Malaysia,
email: {raphael@mmu.edu.my}}}

\maketitle

\begin{abstract}

Higher order differentiation was introduced in a cryptographic
context by Lai. Several attacks can be viewed in the context of
higher order differentiations, amongst them the cube attack and the
AIDA attack. All of the above have been developed for the binary
case.

We examine differentiation in larger fields, starting with the field
$\GF(p)$ of integers modulo a prime $p$. We prove a number of
results on differentiating polynomials over such fields and then
apply these techniques to generalising the cube attack to $\GF(p)$.
The crucial difference is that now the degree in each variable can
be higher than one, and our proposed attack will differentiate
several times with respect to each variable (unlike the classical
cube attack and its larger field version described by Dinur and
Shamir, both of which differentiate at most once with respect to
each variable).

Finally we describe differentiation over finite fields $\GF(p^m)$
with $p^m$ elements and prove that it can be reduced to
differentiation over $\GF(p)$, so a cube attack over $\GF(p^m)$
would be equivalent to cube attacks over $\GF(p)$.
\end{abstract}

{\bf Keywords:} Higher order differentiation, cube attack, higher
order derivative.

\section{Introduction}

The original motivation for this work was to generalise the cube
attack from the binary field to arbitrary finite fields. While doing
so, we developed a number of tools and results for differentiation
over finite fields which could have a broader applicability in
cryptography.

Higher order differentiation was introduced in a cryptographic
context by Lai in~\cite{Lai94} (called there higher order
derivative). This notion had already been used for a very long time,
under the name of finite difference, in other areas of mathematics
(notably for the numerical approximation of the derivative).

The finite difference for a function $f$ is defined as the function
$(\Delta_a f)(\mathbf{x}) = f(\mathbf{x}+\mathbf{a})-f(\mathbf{x})$,
for a fixed difference $\mathbf{a}$ (the domain and codomain of $f$
are commutative groups in additive notation). Usually $f$ is a
function of $n$ variables, so $\mathbf{x} = (x_1, \ldots, x_n)$ and
$\mathbf{a} = (a_1, \ldots, a_n)$. An important particular case is
the finite difference with respect to one variable, namely
$\mathbf{a}=h\mathbf{e}_i$ where the difference step $h$ is a scalar
constant (equal to 1 by default) and $\mathbf{e}_i$ are the
elementary vectors having a 1 in position $i$ and zeroes elsewhere.
Higher order differentiation means repeated application of the
finite difference operator.

The functions we use here are functions in several variables over a
finite field. Any such function can be represented as a polynomial
function and after a sufficiently high number of applications of the
finite difference operator the result is the identically zero
function. However for certain choices of differences $\mathbf{a}$,
this can happen prematurely, for example over the binary field
$\GF(2)$ differentiating twice using the same difference
$\mathbf{a}$ will always result in the zero function, regardless of
the original function $f$. For our applications, we need to ensure
that this does not happen prematurely.

A number of cryptographic attacks can be reformulated using higher
order differentiation. Differential cryptanalysis (introduced by
Biham and Shamir~\cite{BihSha91}) has been thus reformulated by Lai
in~\cite{Lai94}; the cube attack of Dinur and Shamir~\cite{DinSha09}
and the related AIDA attack of Vielhaber~\cite{Vie07} have been
reformulated in Knellwolf and Meier~\cite{KneMei12}, Duan and
Lai~\cite{DuaLai11}.

Our main motivation came from the cube attack. In both the cube
attack and the AIDA attack we have a ``black box'' function $f$ in
several public and secret variables and we select a set of indices
of public variables $I=\{i_1, \ldots, i_k\}$. Then $f$ is evaluated
at each point of a ``cube'' consisting of the vectors that have all
the possible combinations of 0/1 values for the variables with index
in $I$, whereas the remaining variables  are left indeterminate; the
resulting values are summed and the sum will be denoted $f_I$. The
attacks hope that for suitable choices of subsets $I$ of public
variables, the resulting $f_I$ is linear in the secret variables,
for the cube attack (or equals to one secret variable or the product
of several secret variables for the AIDA attack). This situation is
particularly likely when the cardinality of $I$ is just marginally
lower than the total degree of the function. Such subsets $I$ are
found in a preprocessing phase, where the values of the keys can be
chosen freely. In the online phase the key variables are unknown,
and by computing the $f_I$ for the sets $I$ identified in the
preprocessing phase, one obtains a system of linear equations in the
key variables.

It was shown (see Knellwolf and Meier~\cite{KneMei12}, Duan and
Lai~\cite{DuaLai11}) that choosing the variable indices $I=\{i_1,
\ldots, i_k\}$ and computing $f_I$ (as described above) is
equivalent to computing the $k$-order finite difference of $f$ with
respect to the elementary vectors $\mathbf{e}_{i_1}, \ldots,
\mathbf{e}_{i_k}$, i.e. by differentiating once with respect to
$x_{i_1}$, then w.r.t. $x_{i_2}$ and so on, finally differentiating
w.r.t. $x_{i_k}$.

All the attacks above, as well as the higher order differentiation
used in cryptography are over the binary field. While all
cryptographic functions can be viewed as binary functions, there are
a number of ciphers which make significant use of operations modulo
a prime $p>2$ in their internal processing, for example
ZUC~\cite{Zuc11}, IDEA~\cite{LaiMas90,LMM91}, MMB~\cite{DGV93}. It
may therefore be advantageous for such ciphers to also be viewed and
analysed as functions over $\GF(p)$, the field of integers modulo
$p$. Unlike the binary case, a polynomial function can now have
degree more than one in each variable, in fact it can have degree up
to $p-1$ in each variable. There are yet other ciphers which use
operations over Galois fields of the form $\GF(p^m)$, for example
SNOW~\cite{SNOW} and in such fields the degree  of the polynomial
functions can be up to $p^m-1$ in each variable.

A first generalisation of the cube attack to $\GF(p^m)$ was sketched
by Dinur and Shamir in~\cite[page~284]{DinSha09} and also developed
more explicitly by Agnese and Pedicini~\cite{AgnPed11}. We show that
their approach can again be viewed as $k$-order differentiation,
where we differentiate once with respect to each of the variables
$x_{i_1}, \ldots, x_{i_k}$. However we argue that their
generalisation, while correct, has very low chances to lead to a
successful attack because we don't differentiate sufficiently many
times. Namely, on one hand, like in the binary case, the best
chances of success are when the function is differentiated a number
of times just marginally lower than its total degree; on the other
hand in their proposed scheme the number of times that the function
is differentiated is upper bounded by the number of variables, which
(unlike the binary case) can be significantly lower then the degree
of the function (see Remark~\ref{rem:cube-p-classical-not working}).

Our proposed generalisation of the cube attack to $\GF(p)$ improves
the chances of success by differentiating several times with respect
to each of the  chosen variables. Thus there is no intrinsic limit
on the number of differentiations  and therefore this number can be
as close as we want to the degree of the polynomial (only limited by
the computing power available).

We first examine higher order differentiation in $\GF(p)$
(Section~\ref{sec:diff-p}). We show that for repeated
differentiation with respect to the same variable, we can use any
non-zero difference steps and the degree will decrease by exactly
one for each differentiation. Choosing all the steps equal to one
gives a compact and efficient formula for evaluating the higher
order differentiation for a ``black box'' function.

We then show, in Section~\ref{sec:fundam-cube-p} that the main
result of the classical cube attack, \cite[Theorem 1]{DinSha09}, no
longer holds when we differentiate repeatedly with respect to the
same variable in $\GF(p)$;
Example~\ref{ex:not-direct-generalisation} gives a counterexample.
However, we show that a similar result does hold, see
Theorem~\ref{thm:several-var-fundamental-result}. Also, just like in
the binary case, if the ``black box'' function has total degree $d$,
differentiating $d-1$ times with respect to public variables always
results in a function which is either constant or is linear in the
secret variables. The resulting algorithm is sketched in
Section~\ref{sec:algorithm}. Now we not only choose variables for
the ``cube'' but we also choose the number of times we are going to
differentiate with respect to each variable. For computational
efficiency, choosing only one variable (or a small number of
variables) and differentiating  a large number of times with respect
to that variable is preferable. In $\GF(p)$ probabilistic linearity
testing has a smaller expected number of tests than in $\GF(2)$,
see~\cite{KauRon04}.

While this paper concentrates on generalising the cube attack, other
attacks that use differentiation could also be generalised to
$\GF(p)$ using our technique, for example cube testers
(see~\cite{ADMS09}) or differential cryptanalysis.

Finally, for completeness, we deal with generalisations to finite
fields of the form $\GF(p^m)$ in Section~\ref{sec:diff-pm}. Here,
for functions such as $x^d$ with $p \mid d$,  differentiation with
respect to $x$ decreases the degree by more than one regardless of
the difference step. We give a more precise expression of the
decrease in degree for higher order differentiation depending on the
representation of the degree in base $p$. Any function can be
differentiated at most $m(p-1)$ times before it becomes identically
zero. Moreover, in order to avoid the result becoming identically
zero even earlier, the difference steps will be chosen as follows:
$p-1$ steps equal to $b_0$, $p-1$ steps equal to $b_1$ and so on,
where $b_0,\ldots, b_{m-1}$ is a base of $\GF(p^m)$ when viewed as a
vector space over $\GF(p)$. We can thus differentiate $m(p-1)$
times. Due to the fact that differentiation only uses the additive
group of $\GF(p^m)$, which is isomorphic to $\GF(p)^m$,
differentiation over $\GF(p^m)$ can in fact be reduced to
differentiation over $\GF(p)$ in each component of the projection of
the function $f$. Therefore, we feel that developing a cube attack
in $\GF(p^m)$, while possible, does not bring any additional
advantages compared to a cube attack in $\GF(p)$.

\section{Preliminaries}
Throughout this paper $R$ denotes an arbitrary commutative ring with
identity and $\GF(p^m)$ denotes the finite field with $p^m$ elements
where $p$ is prime. We denote by $\mathbf{e}_i = (0,\ldots, 0, 1, 0,
\ldots, 0) \in R^n$ the vector which has a 1 in position $i$ and
zeroes elsewhere, i.e. $\mathbf{e}_1, \ldots, \mathbf{e}_n$ is the
canonical basis of the vector space $R^n$.

We recall the definition of differentiation, which was introduced in
the cryptographic context by Lai in~\cite{Lai94}. This notion was
used long before, under the name finite difference, in other areas
of mathematics, notably for approximating the derivative.
\begin{definition} \label{def:diff}
Let $f: R^n \rightarrow R^s$ be a function in $n$ variables
$x_1,\ldots, x_n$. Let $\mathbf{a} = (a_1,\ldots, a_n) \in
R^n\setminus \{\mathbf{0}\}$. The {\em finite difference operator}
(or differentiation operator) with respect to $\mathbf{a}$
associates to each function $f$ the function $\Delta_{\mathbf{a}} f
: R^n \rightarrow R^s$ defined as
 \[
 \Delta_{\mathbf{a}} f(x_1,\ldots,x_n) = f(x_1 + a_1, \ldots, x_n + a_n)-
 f(x_1,\ldots,x_n).
 \]
Denoting $\mathbf{x} = (x_1,\ldots, x_n)$ we can also write
$\Delta_{\mathbf{a}} f(\mathbf{x}) = f(\mathbf{x+a}) -
f(\mathbf{x})$.

For the particular case of $a=h\mathbf{e_i}$ for some $1\le i\le n$
and $h\in R\setminus\{0\}$, we will call
$\Delta_{\mathbf{h\mathbf{e_i}}}$ the finite difference operator (or
differentiation) with respect to the variable $x_i$ with step $h$,
or simply the finite difference operator with respect to the
variable $x_i$ if $h=1$ or if $h$ is clear from the context. We will
use the abbreviation ``w.r.t. $x_i$'' for ``with respect to $x_i$''.
\end{definition}

\begin{remark}
Note that in the cryptographic literature this operator (and the
resulting function) is usually called the derivative or differential
(see~\cite{Lai94,Knu95}).
 We will avoid the term derivative because of the risk of
confusion with the well established mathematical notion of {\em
formal derivative} of a polynomial. For a polynomial $ \sum_{i=0}^d
c_i x^i\in R[x]$ the formal derivative w.r.t. $x$ is defined as
$\sum_{i=1}^d c_i i x^{i-1}$. It can easily be seen that the formal
derivative operator w.r.t. $x_i$ coincides with the finite
difference operator w.r.t. $x_i$ only for polynomials which have
degree at most one in $x_i$. Polynomial functions over $\GF(2)$ have
degree at most one in each variable, so in this case these notions
coincide. Hence the use of the term ``derivative'' for
$\Delta_{h\mathbf{e_i}} f(x_1,\ldots,x_n)$ is justified for
polynomials over $\GF(2)$, but not for polynomials over other
rings/fields.
\end{remark}

\begin{remark}
For defining the finite difference operator, we do not actually need
to work over a ring $R$, a commutative group (using additive
notation for convenience) is sufficient. Here we used a ring due to
our application to finite fields, and also due to some of the
techniques involving polynomials.
\end{remark}

The finite difference operator is a linear operator; it is
commutative and associative. Repeated application of the operator
(also called higher order differentiation or higher order derivative
in~\cite{Lai94}) will be denoted by
\[
\Delta^{(k)}_{\mathbf{a_1}, \ldots, \mathbf{a_k}} f =
\Delta_{\mathbf{a_1}}\Delta_{\mathbf{a_2}}\ldots \Delta_\mathbf{a_k}
f
\]
where $\mathbf{a_1}, \ldots, \mathbf{a_k} \in R^n$ are not
necessarily distinct.
An explicit formula can be obtained easily from
Definition~\ref{def:diff} by induction:
\begin{proposition} \label{prop:diff-incl-excl-formula}
Let $f: R^n \rightarrow R^s$ be a function in $n$ variables
$x_1,\ldots,x_n$. Let  $\mathbf{a_1}, \ldots, \mathbf{a_k} \in
R^n\setminus \{ \mathbf{0} \}$ not necessarily distinct. Then
\[
\Delta^{(k)}_{\mathbf{a_1}, \ldots, \mathbf{a_k}} f(\mathbf{x}) =
\sum_{j=0}^k (-1)^{k-j} \sum_{\{i_1, \ldots, i_j\} \subseteq \{1,
\ldots, k\}} f(\mathbf{x+a_{i_1} + \cdots + a_{i_j}}).
\]
\end{proposition}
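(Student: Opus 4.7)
The plan is to prove the formula by induction on the order $k$ of differentiation, exactly as hinted in the text. The base case $k=1$ is read directly off Definition~\ref{def:diff}: the only subsets of $\{1\}$ are the empty set (contributing $(-1)^{1}f(\mathbf{x})$) and $\{1\}$ itself (contributing $(-1)^{0}f(\mathbf{x}+\mathbf{a_1})$), whose sum is $f(\mathbf{x}+\mathbf{a_1}) - f(\mathbf{x}) = \Delta_{\mathbf{a_1}}f(\mathbf{x})$.

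For the inductive step, assume the formula holds for order $k-1$ with difference vectors $\mathbf{a_2},\ldots,\mathbf{a_k}$, and write
\[
\Delta^{(k)}_{\mathbf{a_1}, \ldots, \mathbf{a_k}} f(\mathbf{x}) = \Delta_{\mathbf{a_1}}\bigl(\Delta^{(k-1)}_{\mathbf{a_2}, \ldots, \mathbf{a_k}} f\bigr)(\mathbf{x}) = \bigl(\Delta^{(k-1)}_{\mathbf{a_2}, \ldots, \mathbf{a_k}} f\bigr)(\mathbf{x}+\mathbf{a_1}) - \bigl(\Delta^{(k-1)}_{\mathbf{a_2}, \ldots, \mathbf{a_k}} f\bigr)(\mathbf{x}).
\]
Then I would expand each of the two terms on the right using the inductive hypothesis, so that each becomes a signed sum indexed by subsets $S\subseteq\{2,\ldots,k\}$ of values $f\bigl(\mathbf{x}+\sum_{i\in S}\mathbf{a_i}\bigr)$ or $f\bigl(\mathbf{x}+\mathbf{a_1}+\sum_{i\in S}\mathbf{a_i}\bigr)$ respectively.

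The final step is a bijective re-indexing: each subset $T\subseteq\{1,\ldots,k\}$ appears exactly once, either as $T=S$ (if $1\notin T$, coming from the subtracted term) or as $T=S\cup\{1\}$ (if $1\in T$, coming from the shifted term). I would then check that in both cases the sign works out to $(-1)^{k-|T|}$: in the first case the inductive exponent $(-1)^{(k-1)-|S|}$ is multiplied by the $-1$ in front, giving $(-1)^{k-|T|}$; in the second case $(-1)^{(k-1)-|S|}$ with $|S|=|T|-1$ already equals $(-1)^{k-|T|}$. Reassembling yields the stated formula.

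The entire argument is routine; the only place any care is needed is the sign bookkeeping above, which is the sole mild obstacle but is resolved by the two straightforward exponent identities just noted.
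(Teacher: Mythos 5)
Your proof is correct and follows exactly the route the paper intends: the paper gives no written proof beyond the remark that the formula ``can be obtained easily from Definition~\ref{def:diff} by induction,'' and your induction on $k$ with the split over subsets containing or not containing the index $1$, together with the sign check $(-1)^{(k-1)-|S|}\cdot(-1)=(-1)^{k-|T|}$ and $(-1)^{(k-1)-(|T|-1)}=(-1)^{k-|T|}$, is precisely that argument carried out in full. No gaps.
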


Depending of the values of the $\mathbf{a_1}, \ldots, \mathbf{a_k}$
and the characteristic of the ring, $\Delta^{(k)}_{\mathbf{a_1},
\ldots, \mathbf{a_k}} f$ could collapse, becoming the identical zero
function regardless of the function $f$. (This happens, for example,
if the ring is $\GF(2)$ and $\mathbf{a_1}, \ldots, \mathbf{a_k}$ are
not linearly independent.) When differentiating w.r.t. one variable
we need to choose the difference steps so that that this does not
happen. Details will be given in Section~\ref{sec:diff-p} for finite
fields of the form $\GF(p)$ and in Section~\ref{sec:diff-pm} for
finite fields of the form $\GF(p^m)$.

While the finite difference operator can be defined for any
function, in the sequel we will concentrate on polynomial functions.
We will denote by $\deg_{x_i}(f)$ the degree of $f$ in the variable
$x_i$. The total degree will be denoted $\deg(f)$. The following
three results are well known and straightforward, but will be needed
later. The first result states that differentiating with respect to
one variable decreases the degree in that variable by at least one.
The other propositions deal with the results of the differentiation
in a few simple cases.

\begin{proposition}\label{prop:deg-decreases-by-at least-one}
Let $f: R^n \rightarrow R$ be a polynomial function. Let
$h,h_1,\ldots, h_k\in R\setminus\{0\}$ and $i\in\{1, \ldots, n\}$.
If $\deg_{x_i}(f)=0$ then $\Delta_{h\mathbf{e_i}} f(x_1,\ldots,x_n)
\equiv 0$. If $\deg_{x_i}(f)>0$ then
$\deg_{x_i}(\Delta_{h\mathbf{e_i}} f) \le \deg_{x_i}(f)-1$.
Consequently $\deg_{x_i}(\Delta^{(k)}_{h_1\mathbf{e_i}, \ldots,
h_k\mathbf{e_i}} f) \le \deg_{x_i}(f)-k$ if $k\le\deg_{x_i}(f)$, and
$\Delta^{(k)}_{h_1\mathbf{e_i}, \ldots, h_k\mathbf{e_i}} f$ is
identically zero if $k>\deg_{x_i}(f)$.
\end{proposition}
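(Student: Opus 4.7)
The plan is to handle the three assertions in the stated order. The first is immediate from Definition~\ref{def:diff}: if $\deg_{x_i}(f)=0$ then $f$ does not involve the variable $x_i$, so $f(\mathbf{x}+h\mathbf{e_i}) = f(\mathbf{x})$ identically and the difference vanishes.

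For the second, I would view $f$ as a polynomial in the single variable $x_i$ over the coefficient ring $R' = R[x_1,\ldots,x_{i-1},x_{i+1},\ldots,x_n]$, writing $f = \sum_{j=0}^{d} c_j\, x_i^j$ with $d = \deg_{x_i}(f) > 0$ and $c_d \ne 0$. Because the coefficients $c_j$ are invariant under the shift $\mathbf{x} \mapsto \mathbf{x}+h\mathbf{e_i}$, the operator $\Delta_{h\mathbf{e_i}}$ is $R'$-linear, so the whole computation reduces to evaluating $\Delta_{h\mathbf{e_i}} x_i^j = (x_i+h)^j - x_i^j$ for $0 \le j \le d$. The key observation is that the $x_i^j$ terms cancel in this expansion, leaving a polynomial in $x_i$ of degree at most $j-1$; summing over $j$ then yields $\deg_{x_i}(\Delta_{h\mathbf{e_i}} f) \le d-1$.

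The third assertion follows by a straightforward induction on $k$. For $k \le \deg_{x_i}(f)$, one application of the second assertion reduces the degree in $x_i$ by at least one, and the inductive hypothesis applied to $\Delta_{h_1\mathbf{e_i}} f$ with the remaining $k-1$ steps finishes the bound. For $k > \deg_{x_i}(f)$, after $\deg_{x_i}(f)$ differentiations the intermediate function has degree zero in $x_i$, so the first assertion forces the next application to be identically zero, and any further differentiations clearly preserve this.

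I do not expect any serious obstacle. The only delicate point is confirming that the leading $x_i^j$ cancellation in $(x_i+h)^j - x_i^j$ holds regardless of the characteristic of $R$; this is immediate since the binomial expansion begins with $x_i^j + jh\, x_i^{j-1} + \cdots$ and the $x_i^j$ summand is then subtracted off. Note that in positive characteristic the subleading coefficient $jh$ could itself vanish, so the actual degree drop may exceed one; since the proposition only claims $\le d-1$, this extra cancellation only reinforces the bound, and the question of the precise drop is deferred to Section~\ref{sec:diff-p}.
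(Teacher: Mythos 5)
The paper states this proposition without proof, describing it (together with Propositions~\ref{prop:diff-degree1} and~\ref{prop:one-diff-at-zero}) as ``well known and straightforward,'' so there is no official argument to diverge from. Your proof is correct and is exactly the intended standard argument --- reduce by $R'$-linearity to the monomials $x_i^j$, observe that the leading term of $(x_i+h)^j - x_i^j$ cancels regardless of characteristic, and induct on $k$ for the higher-order claims --- so nothing further is needed.
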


\begin{proposition}\label{prop:diff-degree1}
Let $h\in R$, $h \ne 0$ and $i\in\{1, \ldots, n\}$, Let $f: R^n
\rightarrow R$ be a polynomial function with $\deg_{x_i}(f)=1$ i.e.
$f(x_1,\ldots,x_n) = x_i g_1(x_1,\ldots, x_{i-1}, x_{i+1}, \ldots,
x_n)+ g_2(x_1,\ldots, x_{i-1}, x_{i+1}, \ldots, x_n)$ ($g_1$ and
$g_2$ are polynomial functions that do not depend on $x_i$). Then
\[\Delta_{h\mathbf{e_i}} f(x_1,\ldots,x_n) = h g_1(x_1, \ldots,
x_{i-1}, x_{i+1}, \ldots, x_n).\]
\end{proposition}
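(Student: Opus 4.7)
The plan is to apply Definition~\ref{def:diff} directly to the given decomposition $f(x_1,\ldots,x_n) = x_i g_1(\ldots) + g_2(\ldots)$, where for brevity I write $g_1$ and $g_2$ for the values at $(x_1,\ldots,x_{i-1},x_{i+1},\ldots,x_n)$. Since $g_1$ and $g_2$ do not depend on $x_i$, their values are unchanged when we shift the $i$th coordinate by $h$, so
\[
f(x_1,\ldots,x_i+h,\ldots,x_n) = (x_i+h)\,g_1 + g_2.
\]

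Subtracting $f(x_1,\ldots,x_n) = x_i g_1 + g_2$ gives $\Delta_{h\mathbf{e}_i} f = (x_i+h)g_1 + g_2 - x_i g_1 - g_2 = h g_1$, which is the desired identity. This is essentially a one-line computation once the independence of $g_1$ and $g_2$ from $x_i$ is invoked.

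Since the argument consists of a single substitution there is no real obstacle; the only thing to be careful about is that we are manipulating polynomial functions (as elements of the function algebra $R^{R^n}$) rather than formal polynomials, but the two viewpoints agree for the operation of additive shift in one variable, so the substitution is legitimate and no further justification is needed. Note also that the hypothesis $h \ne 0$ is not used in the derivation itself; it is only relevant insofar as Definition~\ref{def:diff} requires a nonzero difference vector.
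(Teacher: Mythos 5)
Your computation is correct and is exactly the straightforward substitution argument the paper has in mind; indeed the paper states this proposition without proof, describing it as ``well known and straightforward.'' Your remark that $h\ne 0$ is only needed for the difference vector to be nonzero in Definition~\ref{def:diff} is also accurate.
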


\begin{proposition}\label{prop:one-diff-at-zero}
Let $f: R^n \rightarrow R$ be a polynomial function. Let $h\in R$,
$h \ne 0$ and $i\in\{1, \ldots, n\}$. Factoring out $x_i$  we write
$f(x_1,\ldots,x_n) = x_i g_1(x_1,\ldots, x_n)+ g_2(x_1,\ldots,
x_{i-1}, x_{i+1}, \ldots, x_n)$ ($g_2$ is a polynomial function that
does not depend on $x_i$, but $g_1$ may depend on $x_i$). Then
\[(\Delta_{h\mathbf{e_i}} f)(x_1,\ldots, x_{i-1}, 0, x_{i+1}, \ldots,
x_n) = h g_1(x_1,\ldots, x_{i-1}, h, x_{i+1}, \ldots, x_n).\]
\end{proposition}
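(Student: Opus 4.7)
The plan is to apply the definition of the finite difference operator directly to the decomposition $f = x_i g_1 + g_2$ and then evaluate at $x_i = 0$; the substitutions should essentially give the result by inspection, with the $g_2$ terms cancelling.

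First I will write out $\Delta_{h\mathbf{e_i}} f$ from Definition~\ref{def:diff} as $f(x_1,\ldots,x_{i-1},x_i+h,x_{i+1},\ldots,x_n) - f(x_1,\ldots,x_n)$. Substituting the decomposition $f = x_i g_1 + g_2$ into each term, and using the fact that $g_2$ does not involve $x_i$ (so it is unaffected by the shift $x_i \mapsto x_i+h$), the two copies of $g_2$ cancel and one obtains
\[
\Delta_{h\mathbf{e_i}} f(\mathbf{x}) = (x_i+h)\, g_1(x_1,\ldots,x_{i-1},x_i+h,x_{i+1},\ldots,x_n) - x_i\, g_1(x_1,\ldots,x_n).
\]

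Next I specialise by setting $x_i = 0$. The second summand vanishes because of the factor $x_i$, and in the first summand the factor $x_i+h$ becomes $h$, while the argument list of $g_1$ reduces to $(x_1,\ldots,x_{i-1},h,x_{i+1},\ldots,x_n)$. This yields exactly the claimed identity $h\, g_1(x_1,\ldots,x_{i-1},h,x_{i+1},\ldots,x_n)$.

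There is no real obstacle here; the only thing one must be mildly careful about is that $g_1$ is allowed to depend on $x_i$, so the shift $x_i \mapsto x_i + h$ genuinely alters its argument, which is why the final answer has $h$ (rather than $0$) in position $i$ of $g_1$. This is also the reason the cancellation works for $g_2$ but not for $g_1$, and is the entire content of the proposition.
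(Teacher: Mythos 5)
Your proof is correct: the direct substitution of the decomposition $f = x_i g_1 + g_2$ into the definition of $\Delta_{h\mathbf{e_i}}$, cancellation of the $g_2$ terms, and evaluation at $x_i = 0$ is exactly the intended argument. The paper states this proposition among results it calls ``well known and straightforward'' and omits the proof entirely, so your computation supplies precisely what the authors left implicit, including the key point that $g_1$ may depend on $x_i$ and hence acquires $h$ in position $i$.
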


Recall that for integers $d, k_1, k_2, \ldots, k_s$ such that
$\sum_{i=1}^s k_i = d$ and $k_i\ge 0$ the multinomial is defined as:
\[
\binom{d}{k_1, k_2, \ldots, k_s} = \frac{d!}{k_1! k_2! \cdots k_s!}.
\]
One combinatorial interpretation is the number of ways that we can
distribute $n$ objects into $s$ (labeled) boxes, so that the first
box has $k_1$ elements, the second $k_2$ elements e.t.c.
 Multinomials are
generalisations of the usual binomial coefficients, with
\[
\binom{d}{k} = \binom{d}{k, d-k}.
\]
Next we examine the effect of higher order differentiation on
univariate monomials; the general formula for univariate polynomials
can be obtained using the linearity of the $\Delta$ operator.
\begin{theorem}\label{thm:diff-formula}
Let $f: R \rightarrow R$ defined by $f(x) = x^d$. Let $
h_1,\ldots,h_k \in R\setminus\{0\}$
\[
\Delta^{(k)}_{h_1\mathbf{e_1}, \ldots, h_k\mathbf{e_1}} x^d =
\sum_{j=k}^d \left(\sum_{\substack{(i_1, \ldots, i_k) \in \{1, 2,
\ldots, j-k+1\}^k \\i_1+\ldots +i_k=j}} \binom{d}{i_1, \ldots, i_k,
d-j} h_1^{i_1}\cdots h_k^{i_k}\right) x^{d-j}
\]
\end{theorem}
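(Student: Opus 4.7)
The plan is to use the inclusion--exclusion formula of Proposition~\ref{prop:diff-incl-excl-formula} combined with a multinomial expansion, then collapse an inner alternating sum via a standard combinatorial identity.

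First I would use Proposition~\ref{prop:diff-incl-excl-formula} with $\mathbf{a}_r = h_r \mathbf{e_1}$ to write
\[
\Delta^{(k)}_{h_1\mathbf{e_1}, \ldots, h_k\mathbf{e_1}} x^d = \sum_{S \subseteq \{1,\ldots,k\}} (-1)^{k-|S|} \Bigl(x + \sum_{r \in S} h_r\Bigr)^{\!d}.
\]
Next I would expand each $(x + \sum_{r \in S} h_r)^d$ by the multinomial theorem, viewing it as a polynomial in the $k+1$ quantities $x, h_1, \ldots, h_k$ in which the exponents of $h_r$ for $r \notin S$ are forced to be zero:
\[
\Bigl(x + \sum_{r \in S} h_r\Bigr)^{\!d} = \sum_{\substack{(i_1,\ldots,i_k,j_0)\\ i_1+\cdots+i_k+j_0 = d \\ i_r = 0 \text{ for } r \notin S}} \binom{d}{i_1, \ldots, i_k, j_0}\, h_1^{i_1} \cdots h_k^{i_k}\, x^{j_0}.
\]

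Then I would swap the order of summation and group by the tuple $(i_1,\ldots,i_k,j_0)$. For a fixed such tuple with support $T = \{r : i_r \ge 1\}$, it contributes precisely when $S \supseteq T$, so its overall coefficient is $\sum_{S \supseteq T}(-1)^{k-|S|}$. Writing $S = T \cup S'$ with $S' \subseteq \{1,\ldots,k\}\setminus T$ and applying the standard identity $\sum_{S' \subseteq U} (-1)^{|S'|} = 0$ for $U \ne \emptyset$, this coefficient equals $1$ when $T = \{1,\ldots,k\}$ and $0$ otherwise. Hence only tuples with every $i_r \ge 1$ survive.

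Finally I would reindex by $j = i_1 + \cdots + i_k$, so that $j_0 = d - j$; the constraint $j_0 \ge 0$ gives $j \le d$, while $i_r \ge 1$ for every $r$ forces $j \ge k$ and automatically implies the upper bound $i_r \le j - (k-1) = j-k+1$ that appears in the statement. Collecting the terms yields the claimed formula. The only real care needed is the bookkeeping in the inclusion--exclusion step and the observation that the upper bound in $\{1,\ldots,j-k+1\}^k$ is redundant given positivity and the sum constraint, so there is no genuine obstacle beyond this combinatorial check. An alternative route is induction on $k$, with base case $\Delta_{h_1} x^d = (x+h_1)^d - x^d$ and the inductive step relying on the multinomial identity $\binom{d}{i_1,\ldots,i_{k-1},d-j'}\binom{d-j'}{i_k} = \binom{d}{i_1,\ldots,i_k,d-j}$ (where $j = j' + i_k$) applied term by term to the hypothesis; I would prefer the direct approach above as it avoids reindexing an inductive sum.
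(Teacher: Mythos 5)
Your proposal is correct, but it takes a genuinely different route from the paper. The paper proves Theorem~\ref{thm:diff-formula} by induction on $k$: the base case is $\Delta_{h_1\mathbf{e_1}}x^d=(x+h_1)^d-x^d$, and the inductive step applies one more difference operator to the inductive hypothesis and absorbs the resulting binomial via the identity $\binom{d}{i_1,\ldots,i_k,d-j}\binom{d-j}{i_{k+1}}=\binom{d}{i_1,\ldots,i_{k+1},d-j'}$ --- exactly the alternative you sketch in your last sentence. Your primary argument instead starts from the closed-form inclusion--exclusion expression of Proposition~\ref{prop:diff-incl-excl-formula}, expands each $(x+\sum_{r\in S}h_r)^d$ multinomially, and kills every tuple with some $i_r=0$ by the alternating-sum identity $\sum_{S'\subseteq U}(-1)^{|S'|}=0$ for $U\neq\emptyset$; the surviving tuples are precisely those with all $i_r\geq 1$, and your observation that the upper bound $i_r\leq j-k+1$ is then automatic is the right bookkeeping point. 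Both arguments are sound. The direct approach makes it transparent \emph{why} only the all-positive exponent tuples survive and derives the coefficient in one pass, at the cost of relying on Proposition~\ref{prop:diff-incl-excl-formula} (itself proved by induction) and a slightly heavier swap-and-group step; the paper's induction is more self-contained and mechanical but hides the combinatorial cancellation inside the reindexing. Either would be acceptable here.
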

\begin{proof}
Induction on $k$. For $k=1$ we have $\Delta_{h_1\mathbf{e_1}} x^d =
(x+h_1)^d - x^d = \sum_{j=1}^d \binom{d}{j} h_1^j x^{d-j} =
\sum_{j=1}^d \binom{d}{j, d-j} h_1^j x^{d-j}$ and the statement is
verified.

Not let us assume the statement holds for a given $k$ and we prove
it for $k+1$.
\begin{eqnarray*}
\lefteqn{\Delta^{(k+1)}_{h_1\mathbf{e_1}, \ldots,
h_{k+1}\mathbf{e_1}} x^d  =  \Delta_{h_{k+1}
\mathbf{e_1}}\Delta^{(k)}_{h_1\mathbf{e_1},
\ldots, h_{k}\mathbf{e_1}} x^d }\\
 & = & \sum_{j=k}^d
\left(\sum_{\substack{(i_1, \ldots, i_k) \in \{1, 2, \ldots,
j-k+1\}^k \\i_1+\ldots +i_k=j}} \binom{d}{i_1, \ldots, i_k, d-j}
h_1^{i_1}\cdots h_k^{i_k}\right) ((x+h_{k+1})^{d-j} -x^{d-j}) \\
& = & \sum_{j=k}^d \left(\sum_{\substack{(i_1, \ldots, i_k) \in \{1,
2, \ldots, j-k+1\}^k \\i_1+\ldots +i_k=j}} \binom{d}{i_1, \ldots,
i_k, d-j} h_1^{i_1}\cdots
h_k^{i_k}\right)\left(\sum_{i_{k+1}=1}^{d-j}
\binom{d-j}{i_{k+1}} h_{k+1}^{i_{k+1}} x^{d-j-i_{k+1}}\right) \\
& = & \sum_{j'=k+1}^d \left(\sum_{\substack{(i_1, \ldots, i_{k+1})
\in \{1, 2, \ldots, j'-k\}^{k+1} \\i_1+\ldots +i_{k+1}=j'}}
\binom{d}{i_1, \ldots, i_{k+1}, d-j'} h_1^{i_1}\cdots
h_{k+1}^{i_{k+1}}\right) x^{d-j'}.
\end{eqnarray*}
The last line uses the identity:
\[ \binom{d}{i_1, \ldots, i_k, d-j}\binom{d-j}{i_{k+1}} =\binom{d}{i_1, \ldots, i_{k+1},
d-j'}\] where $j=i_1+\ldots +i_k$ and $j'= j+i_{k+1}$. 
\end{proof}

Recall that in a finite field $\GF(p^m)$ we have $a^{p^m} = a$ for
all elements $a$. Hence, while (formal) polynomials over $\GF(p^m)$
could have any degree in each variable, when we talk about the
associated { \em polynomial function}, there will always be a unique
polynomial of degree at most $p^m-1$ in each variable which defines
the same polynomial function. In other words we are working in the
quotient ring $\GF(p^m)[x_1, \ldots,x_n]/\langle x_1^{p^m} - x_1,
\ldots x_n^{p^m} - x_n \rangle$, and we use as representative of
each class the unique polynomial which has degree at most $p^m-1$ in
each variable.

Moreover, all functions in $n$ variables over a finite field can be
written as polynomial functions of $n$ variables  The polynomial can
be obtained by interpolation from the values of the function at each
point in its (finite) domain. (This is obviously not the case for
infinite fields). To summarise, each function in $n$ variables over
$\GF(p^m)$ can be uniquely expressed as a polynomial function
defined by a polynomial in $\GF(p^m)[x_1, \ldots, x_n]$ of degree at
most $p^m-1$ in each variable.

\section{Classical cube attack and differentiation}\label{sec:cube-classical}
In this section we first recall the classical cube attack
from~\cite{DinSha09}, and its interpretation in the framework of
higher order differentials (see~\cite{KneMei12,DuaLai11}). We then
recall a first generalisation to higher fields sketched
in~\cite{DinSha09} (see also~\cite{AgnPed11}).

In the cube attack (\cite{DinSha09}), one has a ``black box''
polynomial function  $f: \GF(2)^n \rightarrow \GF(2)$ in $n$
variables $x_1,\ldots, x_n$. Recall that polynomial functions over
$\GF(2)$ have degree at most one in each variable. (Note that the
function is named $p$ in the cube attack papers, but we had to
rename it $f$ as later we will work in fields of characteristic
other than 2, and we felt $p$ was a well-established notation for
the characteristic.)

The next definitions are taken from \cite{DinSha09}: ``Any subset
$I$ of size $k$ defines  a $k$ dimensional Boolean cube $C_I$ of
$2^k$ vectors in which we assign all the possible combinations of
0/1 values to variables in $I$ and leave all the other variables
undetermined. Any vector $\mathbf{v} \in C_I$ defines a new derived
polynomial $f_{|\mathbf{v}}$ with $n-k$ variables (whose degree may
be the same or lower than the degree of the original polynomial).
Summing these derived polynomials over all the $2^k$ possible
vectors in $C_I$ we end up with a new polynomial which is denoted by
$ f_I = \sum_{\mathbf{v} \in C_I} f_{|\mathbf{v}}. $.'' Note that
the computation of $f_I$ requires $2^k$ calls to the ``black box''
function $f$. On the other hand denoting by $t_I$ the product of the
variables with indices in $I = \{i_1, \ldots, i_k\}$, i.e.
$t_I=x_{i_1}\cdots x_{i_k}$, we can factor the common subterm $t_I$
out of some of the terms in $f$ and write $f$ as
\[
f(x_1, \ldots, x_n) = t_I f_{S(I)} + r(x_1, \ldots, x_n).
\]
where each of the terms of $r(x_1, \ldots, x_n)$ misses at least one
of the variables with index in $I$. Note that $f_{S(I)}$ is a
polynomial in the variables with indices in $\{1,2, \ldots, n\}
\setminus I$.

The cube attack is based on the following main result:

\begin{theorem}(\cite[Theorem
1]{DinSha09})\label{thm:cube-classical} For any polynomial $f$ and
subset of variables $I$, $f_I \equiv f_{S(I)} \pmod 2$.
\end{theorem}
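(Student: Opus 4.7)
The plan is to exploit the decomposition $f = t_I f_{S(I)} + r$ together with linearity of the ``sum over the cube'' operator. Since $\mathbf{v} \mapsto f_{|\mathbf{v}}$ is linear in $f$, summing over $\mathbf{v} \in C_I$ gives
\[
f_I = (t_I f_{S(I)})_I + r_I,
\]
so it suffices to compute the two summands separately.

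For the first summand, I would note that $t_I = \prod_{i \in I} x_i$ vanishes at every $\mathbf{v} \in C_I$ except the one assigning $1$ to every variable in $I$, at which $t_I$ evaluates to $1$. Since $f_{S(I)}$ depends only on the variables with indices outside $I$, the single surviving term of the sum is exactly $f_{S(I)}$, so $(t_I f_{S(I)})_I = f_{S(I)}$.

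For the second summand, I would expand $r$ as a sum of monomials and handle each one separately by linearity. By construction, every monomial of $r$ fails to contain at least one variable $x_j$ with $j \in I$. Fixing the values of all variables in $I \setminus \{j\}$ and summing the monomial over $x_j \in \{0,1\}$ produces twice the same value (because the monomial does not depend on $x_j$), which is $0$ in $\GF(2)$. Hence the full sum of the monomial over $C_I$ is zero, and so $r_I \equiv 0 \pmod 2$. Combining the two contributions yields $f_I \equiv f_{S(I)} \pmod 2$.

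There is essentially no obstacle here; the content is purely bookkeeping plus the characteristic-two pairing argument. The only subtlety worth flagging is that $f$ must be viewed in its $\GF(2)$-reduced form (each variable appearing to degree at most one), so that the decomposition $f = t_I f_{S(I)} + r$ with the stated property on $r$ is unambiguous; this is guaranteed by the paragraph preceding the theorem statement that identifies polynomial functions over $\GF(2)$ with multilinear polynomials.
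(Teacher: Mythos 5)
Your proof is correct, and it is essentially the standard Dinur--Shamir argument that the paper itself does not reproduce here (Theorem~\ref{thm:cube-classical} is stated by citation) but sketches again when proving the generalisation in Theorem~\ref{thm:fundamental-p-once}: split $f$ as $t_I f_{S(I)} + r$, observe that only the all-ones vertex of the cube contributes to the first part, and that each monomial of $r$ cancels. The only cosmetic difference is that you handle $r$ by the direct characteristic-two pairing over the missing variable, where the paper's generalised proof invokes the vanishing of the corresponding higher-order difference (Proposition~\ref{prop:deg-decreases-by-at least-one}); these are equivalent over $\GF(2)$.
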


This result was reformulated using higher order differentials by
several authors (\cite{KneMei12,DuaLai11}). We present such a
reformulation using our notations:

\begin{theorem}\label{thm:cube-classical-reformulated} For any polynomial $f$ and
subset of variables $I = \{ i_1, \ldots, i_k\}$,  we have $
\Delta^{(k)}_{\mathbf{e}_{i_1}, \ldots, \mathbf{e}_{i_k}} f =f_I =
f_{S(I)}$.
\end{theorem}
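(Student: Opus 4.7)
The second equality $f_I = f_{S(I)}$ is precisely Theorem~\ref{thm:cube-classical}, which is stated as already established, so the substantive content is the first equality $\Delta^{(k)}_{\mathbf{e}_{i_1}, \ldots, \mathbf{e}_{i_k}} f = f_I$. My plan is to expand the left-hand side using the inclusion-exclusion formula of Proposition~\ref{prop:diff-incl-excl-formula} and recognise the result as the cube sum defining $f_I$.

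Concretely, applying Proposition~\ref{prop:diff-incl-excl-formula} with $\mathbf{a}_j = \mathbf{e}_{i_j}$ gives
\[
\Delta^{(k)}_{\mathbf{e}_{i_1}, \ldots, \mathbf{e}_{i_k}} f(\mathbf{x}) = \sum_{j=0}^k (-1)^{k-j} \sum_{\{l_1, \ldots, l_j\} \subseteq \{1, \ldots, k\}} f\bigl(\mathbf{x} + \mathbf{e}_{i_{l_1}} + \cdots + \mathbf{e}_{i_{l_j}}\bigr).
\]
Since we are working over $\GF(2)$, the signs $(-1)^{k-j}$ all equal $1$, and the double sum collapses into a single sum indexed by subsets $J \subseteq \{1,\ldots,k\}$:
\[
\Delta^{(k)}_{\mathbf{e}_{i_1}, \ldots, \mathbf{e}_{i_k}} f(\mathbf{x}) = \sum_{J \subseteq \{1,\ldots,k\}} f\Bigl(\mathbf{x} + \sum_{l \in J} \mathbf{e}_{i_l}\Bigr).
\]

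Next I would observe, using Proposition~\ref{prop:deg-decreases-by-at least-one}, that since $\deg_{x_{i_j}}(f) \le 1$ for every $j$, one application of $\Delta_{\mathbf{e}_{i_j}}$ drops this degree to $0$; commutativity of the $\Delta$ operators therefore makes $\Delta^{(k)}_{\mathbf{e}_{i_1}, \ldots, \mathbf{e}_{i_k}} f$ independent of every variable in $I$. Hence we may evaluate at $x_{i_1} = \cdots = x_{i_k} = 0$ without loss, which reduces each summand $f(\mathbf{x} + \sum_{l \in J} \mathbf{e}_{i_l})$ to the value of $f$ on the assignment that sets the variables indexed by $\{i_l : l \in J\}$ to $1$, sets the remaining variables in $I$ to $0$, and leaves the others indeterminate. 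As $J$ ranges over all subsets of $\{1,\ldots,k\}$, these assignments are exactly the $2^k$ vectors of the Boolean cube $C_I$, so the sum equals $\sum_{\mathbf{v} \in C_I} f_{|\mathbf{v}} = f_I$.

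There is no real obstacle here beyond bookkeeping; the only point requiring slight care is the justification that the $k$-fold difference is independent of the variables in $I$ (so that setting them to $0$ is legitimate), which is where the binary hypothesis $\deg_{x_i}(f)\le 1$ enters essentially. The final equality $f_I = f_{S(I)}$ then follows by appealing directly to Theorem~\ref{thm:cube-classical}.
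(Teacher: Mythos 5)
Your proof is correct and takes essentially the same route as the paper's: both expand the $k$-fold difference via Proposition~\ref{prop:diff-incl-excl-formula}, recognise the resulting sum over all $0/1$ increments of $x_{i_1},\ldots,x_{i_k}$ as the cube sum $f_I$, and then appeal to Theorem~\ref{thm:cube-classical} for the last equality. The only cosmetic difference is in justifying independence from the variables in $I$: the paper observes directly that the sum yields $f_I$ for any fixed constant values of $x_{i_1},\ldots,x_{i_k}$, whereas you deduce it from the degree bound via Proposition~\ref{prop:deg-decreases-by-at least-one}; both are valid.
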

\begin{proof}
To show that  $f_I = \Delta^{(k)}_{\mathbf{e}_{i_1}, \ldots,
\mathbf{e}_{i_k}} f$ we use
Proposition~\ref{prop:diff-incl-excl-formula}. We have
\[
\Delta^{(k)}_{\mathbf{e_{i_1}}, \ldots, \mathbf{e_{i_k}}}
f(\mathbf{x}) = \sum_{(b_1,\ldots,b_k)\in GF(2)^k} f(x_1, \ldots,
x_{i_1-1}, x_{i_1}+b_1, x_{i_1+1}, \ldots, x_{i_k}+b_k, \ldots, x_n)
\]
Note that evaluating the expression above for any fixed constant
values of $x_{i_1}, \ldots, x_{i_k}$ yields $f_I$. Hence
$\Delta^{(k)}_{\mathbf{e}_{i_1}, \ldots, \mathbf{e}_{i_k}} f$  does
not depend on $x_{i_1}, \ldots, x_{i_k}$ and is equal to $f_I$. By
Theorem~\ref{thm:cube-classical}, $f_I = f_{S(I)}$.
\end{proof}

For the cube attack we are particularly interested in the situation
when $f_{S(I)}$ (and therefore $f_I$) has degree exactly one, i.e.
it is linear but not constant (the corresponding term $t_I$ is then
called maxterm in \cite{DinSha09}). Let $d$ be the total degree of
$f$. Then $I$ having $d-1$ elements is a sufficient (but not
necessary) condition for $f_{S(I)}$ to have degree at most one, i.e.
to be linear or constant.

Generalising the cube attack from the binary field to $\GF(p^m)$ was
sketched in \cite{DinSha09}: ``Over a general field $\GF(p^m)$ with
$p>2$, the correct way to apply cube attacks is to alternately add
and subtract the outputs of the master polynomial with public inputs
that range only over the two values 0 and 1 (and not over all their
possible values of $0, 1, \ldots, p$), where the sign is determined
by the sum (modulo 2) of the vector of assigned values.''

We make this idea more precise; this was also done in
\cite{AgnPed11} but we will follow a simpler approach for the proof
of the main result. Let $f$ be again a function of $n$ variables
$x_1,\ldots, x_n$, but this time over an arbitrary finite field
$\GF(p^m)$. Note that now $f$ can have degree up to $p^m-1$ in each
variable.

As before, we select a subset of $k$ indices $I = \{i_1, \ldots,
i_k\}\subseteq \{1,2, \ldots, n\}$ and consider a ``cube'' $C_I$
consisting of the $n$-tuples which have all combinations of the
values 0/1 for the variables with indices in $I$, while the other
variables remain indeterminate. The function $f$ is evaluated at the
points in the cube and these values are summed with alternating +
and $-$ signs obtaining a value
\[
f_I = \sum_{\mathbf{v} \in C_I} (-1)^{k-\w(\mathbf{v})}
f_{|\mathbf{v}}
\]
where $\w(\mathbf{v})$ denotes the Hamming weight of $\mathbf{v}$
ignoring the variables what have remained indeterminate.

On the other hand denoting by $t_I$ the product of the variables
with indices in $I$, we can factor the common subterm $t_I$ out of
some of the terms in $f$ and write $f$ as
\[
f(x_1, \ldots, x_n) = t_I f_{S(I)}(x_1, \ldots, x_n) + r(x_1,
\ldots, x_n).
\]
where each of the terms of $r(x_1, \ldots, x_n)$ misses at least one
of the variables with index in $I$. Note that, unlike the binary
case, now $f_{S(I)}$ can contain variables with indices in $I$.

Now we can prove an analogue of Theorems~\ref{thm:cube-classical}
and~\ref{thm:cube-classical-reformulated}. (A similar theorem
appears in~\cite[Theorem 6]{AgnPed11}, but both the statement and
the proof are more complicated, involving a term $t =
x_{i_1}^{r_1}\cdots x_{i_k}^{r_k}$ instead of $x_{i_1}\cdots
x_{i_k}$ and consequently when factoring out $t$ and writing $f = t
f_{S(t)} + q$, having to treat separately the terms of $q$ which
contain some variables with indices in $I$, and the terms of $q$
which do not.)
\begin{theorem}\label{thm:fundamental-p-once}
Let $f: \GF(p^m)^n \rightarrow \GF(p^m)$ be a polynomial function
and $I$ a subset of variable indices. Denote by $ \mathbf{v}$ the
$n$-tuple having values of 1 in the positions with indices in $I$
and indeterminates in the other positions, and by  $ \mathbf{u}$ the
$n$-tuple having values of 0 in the positions in $I$ and
indeterminates in the other positions. Then:
\[
f_I = (\Delta^{(k)}_{\mathbf{e}_{i_1}, \ldots, \mathbf{e}_{i_k}}
f)(\mathbf{u})= f_{S(I)}(\mathbf{v})
\]
\end{theorem}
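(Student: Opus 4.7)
The plan is to prove the two claimed equalities in turn; both are essentially bookkeeping once the right formula is invoked. For the first equality $f_I = (\Delta^{(k)}_{\mathbf{e}_{i_1}, \ldots, \mathbf{e}_{i_k}} f)(\mathbf{u})$, I would apply Proposition~\ref{prop:diff-incl-excl-formula} with $\mathbf{a}_j = \mathbf{e}_{i_j}$ and evaluate at the point $\mathbf{u}$, which has zeros in all $I$-positions. As the subset $\{j_1, \ldots, j_s\} \subseteq \{1, \ldots, k\}$ in the inclusion-exclusion sum ranges over all subsets, the point $\mathbf{u} + \mathbf{e}_{i_{j_1}} + \cdots + \mathbf{e}_{i_{j_s}}$ ranges exactly over the vectors of $C_I$, with $s$ equal to the Hamming weight $\w(\mathbf{v})$ of the corresponding cube point. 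The sign $(-1)^{k-s}$ produced by Proposition~\ref{prop:diff-incl-excl-formula} matches the sign $(-1)^{k-\w(\mathbf{v})}$ in the definition of $f_I$, so the two expressions agree term by term.

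For the second equality $f_I = f_{S(I)}(\mathbf{v})$, I would split $f = t_I f_{S(I)} + r$ as in the excerpt and evaluate the alternating cube sum on each summand. For the residue $r$, fix any of its monomial terms and pick an index $i_j \in I$ that does not appear in that monomial; then group the $2^k$ cube points into $2^{k-1}$ pairs that differ only in the $i_j$-th coordinate. The monomial takes the same value on both members of a pair, but their weights differ by one, so their signs are opposite and the contributions cancel. Summing over all monomials of $r$ gives zero. For the $t_I f_{S(I)}$ part, the factor $t_I(\mathbf{v}') = v'_{i_1}\cdots v'_{i_k}$ vanishes at every $\mathbf{v}' \in C_I$ except at the all-ones assignment on $I$, namely $\mathbf{v}$ itself, where $t_I(\mathbf{v})=1$ and the sign $(-1)^{k-k} = +1$. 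Only the single surviving term $f_{S(I)}(\mathbf{v})$ remains.

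The main subtlety to keep in mind (and to flag explicitly in the write-up) is that, unlike the binary case, $f_{S(I)}$ may itself still depend on variables indexed by $I$, because over $\GF(p^m)$ those variables can carry exponents up to $p^m-1$ and factoring out a single copy of $t_I$ does not remove them entirely. This is why the pairing argument has to be applied to $r$ alone and why the final evaluation of $f_{S(I)}$ is made at the specific point $\mathbf{v}$ (ones in $I$-positions, indeterminates elsewhere), not left as a polynomial identity. Once this is noted, no significant obstacle remains; the proof is just the two cancellation arguments above, strictly simpler than the proof in \cite{AgnPed11} because we only factor out $t_I = x_{i_1}\cdots x_{i_k}$ rather than a general power product.
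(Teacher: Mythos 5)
Your proof is correct and follows essentially the same route as the paper: the first equality via Proposition~\ref{prop:diff-incl-excl-formula} exactly as in Theorem~\ref{thm:cube-classical-reformulated}, and the second via the observation that $t_I$ vanishes at every cube point except the all-ones assignment on $I$. The only cosmetic difference is that you dispose of the residue $r$ by an explicit sign-pairing over the cube, whereas the paper reduces to monomials and invokes Proposition~\ref{prop:deg-decreases-by-at least-one} (differentiating w.r.t.\ a variable the monomial misses gives zero) --- these are the same cancellation in two guises.
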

\begin{proof}
The fact that $f_I = (\Delta^{(k)}_{\mathbf{e}_{i_1}, \ldots,
\mathbf{e}_{i_k}} f)(\mathbf{u})$ follows from
Proposition~\ref{prop:diff-incl-excl-formula} in the same way as in
the proof of Theorem~\ref{thm:cube-classical-reformulated}.

It suffices to show
 $(\Delta^{(k)}_{\mathbf{e}_{i_1},
\ldots, \mathbf{e}_{i_k}} f)(\mathbf{u})= f_{S(I)}(\mathbf{v})$ for
the case when $f$ is a monomial. The rest follows from the linearity
of the operators, as $(f+g)_{S(I)} = f_{S(I)}+g_{S(I)}$ and $\Delta$
is a linear operator.

If $f$ is a monomial not divisible by $t_I$, then both $f_{S(I)}$
and $\Delta^{(k)}_{\mathbf{e}_{i_1}, \ldots, \mathbf{e}_{i_k}} f$
are identically zero, the latter using
Proposition~\ref{prop:deg-decreases-by-at least-one}.

Now assume $f = t_I f_{S(I)}$ for some monomial $f_{S(I)}$. Like in
the proof of~\cite[Theorem 1]{DinSha09}, we note that in the sum in
the definition of $f_I$ (or, equivalently in the sum given by
Proposition~\ref{prop:diff-incl-excl-formula} for
$(\Delta^{(k)}_{\mathbf{e}_{i_1}, \ldots, \mathbf{e}_{i_k}}
f)(\mathbf{u})$) only one term is non-zero; namely  $t_I$ evaluates
to a non-zero value iff $x_{i_1}= \ldots = x_{i_k} = 1$; hence $f_I
= f_{S(I)}(\mathbf{v})$.
 \end{proof}

\begin{remark}\label{rem:cube-p-classical-not working}
A cube attack based on Theorem~\ref{thm:fundamental-p-once} above
would again search for sets $I$ for which $f_I$ is linear in the
variables whose indices are not in $I$. If the total degree of $f$
is $d$, the total degree of $f_I$  can be, in the worst case, $d-k$.
If $k=d-1$ we can guarantee that $f_I$ is linear or a constant. More
generally, the closer $k$ gets to $d-1$ (while still having $k\le
d-1$), the higher the chances of linearity.

However, unlike the binary case where $d\le n$, now $d$ can have any
value up to $(p^m-1)n$. Hence the (unknown) degree $d$ of $f$ could
well be considerably higher than the number of variables $n$. In
such a case, $k\le n-1$ is considerably lower than $d-1$, and the
chances of linearity are very small. In other words, since we
differentiate at most once w.r.t. each variable, so a total of at
most $n-1$ times, the degree decreases by around $n-1$ in general,
and the resulting function can still have quite a high degree.
Therefore, while a cube attack based on this result would be
correct, it would have extremely low chances of success.

Our proposed generalisation of this attack would increase these
chances by differentiating several times with respect to each
variable. This will result in a greater decrease of the degree, thus
improving the chances of reaching a linear result.
\end{remark}

\section{Generalisations  to $\GF(p)$} \label{sec:proposed-cube}
\subsection{Differentiation in $\GF(p)$}\label{sec:diff-p}

Differentiation with respect to a variable decreases the degree in
that variable by at least one. In the binary case, the degree of a
polynomial function in each variable is at most one, so we can only
differentiate once w.r.t. each variable; a second differentiation
will trivially produce the zero function. In $\GF(p)$ the degree in
each variable is up to $p-1$. We can therefore consider
differentiating several times (and possibly using different
difference steps) with respect to each variable. We first show that
a monomial of degree $d_i$ in a variable $x_i$ can be differentiated
$m_i$ times w.r.t. $x_i$, for any $m_i\le d_i$ (and using any
collection of non-zero difference steps) and the degree decreases by
exactly $m_i$. Hence we can differentiate $d_i$ times without the
result becoming identically zero.

\begin{theorem}\label{thm:one-var-poly}
Let $m_1\le d_1\le p-1$ and $h_1,\ldots, h_{m_1} \in \GF(p)\setminus
\{0\}$. Then
\begin{eqnarray}\label{eq:one-var-poly}
\lefteqn{
\Delta^{(m_1)}_{h_1\mathbf{e_{1}},
\ldots, h_{m_1}\mathbf{e_{1}}} x_1^{d_1} = } \nonumber \\
& = & \sum_{j=m_1}^{d_1} \left(\sum_{\substack{(i_1, \ldots,
i_{m_1}) \in \{1, 2, \ldots, j-m_1+1\}^{m_1} \nonumber
\\i_1+\ldots +i_{m_1}=j}} \binom{d_1}{i_1, \ldots, i_{m_1}, d_1-j}
h_1^{i_1}\cdots h_{m_1}^{i_{m_1}} \right) x_1^{d_1-j}
\end{eqnarray}
 In the expression above the
coefficient of $x_1^{d_1-m_1}$ equals
\[\binom{d_1}{1, 1, \ldots, 1, d_1-m_1}   h_1\cdots h_{m_1} = \frac{d_1!}{(d_1-m_1)!}
h_1\cdots h_{m_1} \neq 0
\]
hence the degree in $x_1$ is exactly $d_1-m_1$. For the particular
case of $h_1=\ldots = h_{m_1}=1$, the leading coefficient becomes
\[\binom{d_1}{1, 1, \ldots, 1, d_1-m_1}  = \frac{d_1!}{(d_1-m_1)!}
\]
and the free term becomes
\begin{equation}\label{eq:free-term-one-var}
\sum_{\substack{(i_1, \ldots, i_{m_1}) \in \{1, 2, \ldots,
d_1-m_1+1\}^{m_1}
\\i_1+\ldots +i_{m_1}=d_1}} \binom{d_1}{i_1, \ldots, i_{m_1}, 0}.
\end{equation}
If $h_1=\ldots = h_{m_1}=1$ and moreover $d_1=m_1$ we have
\begin{equation}
\label{eq:full-diff-one-var} \Delta^{(d_1)}_{\mathbf{e_{1}}, \ldots,
\mathbf{e_{1}}} x_1^{d_1}  = \binom{d_1}{1, 1, \ldots, 1, 0} \bmod p
= d_1! \bmod p.
\end{equation}
\end{theorem}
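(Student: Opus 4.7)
The plan is to deduce the main formula~\eqref{eq:one-var-poly} directly from Theorem~\ref{thm:diff-formula}, which has already been proved by induction for an arbitrary commutative ring $R$. Specialising to $R=\GF(p)$ with a single variable $x_1$ yields~\eqref{eq:one-var-poly} verbatim, so no fresh induction is needed here. The real content of the theorem lies in the claims about the leading coefficient and the degree in $x_1$, which require non-vanishing checks in $\GF(p)$.

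First I would identify the leading monomial. The highest power of $x_1$ appearing in~\eqref{eq:one-var-poly} corresponds to the smallest admissible $j$, namely $j=m_1$, yielding $x_1^{d_1-m_1}$. In this case the inner sum is indexed by tuples $(i_1,\ldots,i_{m_1})\in\{1\}^{m_1}$ with sum $m_1$, of which only $(1,1,\ldots,1)$ qualifies, so the multinomial coefficient collapses to $\binom{d_1}{1,\ldots,1,d_1-m_1}=\frac{d_1!}{(d_1-m_1)!}$ and the monomial in the $h_i$'s is simply $h_1\cdots h_{m_1}$.

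The main (and essentially only) obstacle is to confirm that this leading coefficient is non-zero in $\GF(p)$. I would rewrite $\frac{d_1!}{(d_1-m_1)!}=d_1(d_1-1)\cdots(d_1-m_1+1)$, a product of $m_1$ consecutive positive integers lying in $\{1,2,\ldots,p-1\}$ by the hypothesis $d_1\le p-1$. Each factor is then non-zero modulo $p$, and combined with the assumption $h_i\ne 0$ this forces the leading coefficient to be non-zero, so $\deg_{x_1}$ is exactly $d_1-m_1$. The bound $d_1\le p-1$ is precisely what prevents the kind of premature degree collapse that will resurface in $\GF(p^m)$ in Section~\ref{sec:diff-pm}.

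Finally, the listed special cases are routine substitutions. Setting $h_1=\cdots=h_{m_1}=1$ turns the leading coefficient into $\frac{d_1!}{(d_1-m_1)!}$ and the free term into~\eqref{eq:free-term-one-var} by reading off the $j=d_1$ summand of the main formula. In the extreme case $d_1=m_1$ the outer sum collapses to its single term $j=d_1$, whose inner sum has the unique index $(1,\ldots,1)$, giving $\binom{d_1}{1,\ldots,1,0}=d_1!$ and proving~\eqref{eq:full-diff-one-var}.
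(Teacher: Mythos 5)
Your proposal is correct and follows essentially the same route as the paper: the authors likewise derive the formula by specialising Theorem~\ref{thm:diff-formula} and then observe that $d_1<p$ forces $\frac{d_1!}{(d_1-m_1)!}$ to be non-zero modulo $p$. Your write-up simply spells out the details (the collapse of the inner sum at $j=m_1$ to the single tuple $(1,\ldots,1)$, and the special cases) that the paper leaves implicit.
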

\begin{proof}
Use Theorem~\ref{thm:diff-formula}. Since $d_1<p$, we have that
$\displaystyle \frac{d_1!}{(d_1-m_1)!}$ is not divisible by $p$.
 \end{proof}

\begin{example}\label{ex:diff}
Let $f(x_1, x_2, x_3, x_4) = x_1^5 x_2 + x_1^4 x_3 x_4 + x_4^6$ be a
polynomial with coefficients in  $\GF(31)$. We choose the variable
$x_1$ and differentiate repeatedly w.r.t. $x_1$, always with
difference step equal to one. Differentiating once w.r.t. $x_1$ we
obtain:
\begin{eqnarray*}
\lefteqn{
\Delta_{\mathbf{e_{1}}} f(x_1, x_2, x_3, x_4) =} \\ & & x_1^4(5x_2)
+ x_1^3(10 x_2+4x_3x_4) + x_1^2(10 x_2+6x_3x_4) +x_1(5 x_2+4x_3x_4)
+x_2+x_3x_4
\end{eqnarray*}
Differentiating again w.r.t. $x_1$ we obtain:
\begin{eqnarray*}
\lefteqn{
\Delta^{(2)}_{\mathbf{e_{1}},\mathbf{e_{1}}} f(x_1, x_2, x_3, x_4) =
}
\\ &  & x_1^3(20x_2) + x_1^2(29 x_2+12x_3x_4)  +x_1(8 x_2+ 24
x_3x_4) + 30 x_2 + 14 x_3x_4
 \end{eqnarray*}
Finally, if we differentiate a total of 5 times we obtain:
\[
\Delta^{(5)}_{\mathbf{e_{1}},
\ldots, \mathbf{e_{1}}} f(x_1, x_2, x_3, x_4) = 5! x_2 = 27 x_2
\]
as expected by (\ref{eq:full-diff-one-var}).
\end{example}

For the remainder of this section, for simplicity we will always
choose all the difference steps $h_i$ equal to one. The case of
arbitrary $h_i$ can be treated similarly, but the formulae become
more cumbersome. For convenience we will introduce some more
notation. We pick a subset of $k$ variable indices $I= \{ i_1,
\ldots, i_k \}$ and we also pick multiplicities for each variable,
$m_1, \ldots, m_k$. Denote by $t$ the term $t = x_{i_1}^{m_1} \cdots
x_{i_k}^{m_k}$. We will apply the finite difference operator $m_1$
times w.r.t. the variable $x_{i_1}$, and $m_2$ times w.r.t. the
variable $x_{i_2}$ etc. always with difference step equal to one.
 More precisely we
define:
\[
f_t(x_1, \ldots, x_n) = \Delta^{(m_1)}_{\mathbf{e_{i_1}}, \ldots,
\mathbf{e_{i_1}}} \ldots \Delta^{(m_k)}_{\mathbf{e_{i_k}}, \ldots,
\mathbf{e_{i_k}}} f(x_1, \ldots, x_n)
\]

We now generalise Theorem~\ref{thm:one-var-poly} to the case when we
differentiate w.r.t. several variables.

\begin{theorem}\label{thm:several-var-fundamental-result-one-monomial}
 Let $k\le n$ and let
$m_1, \ldots, m_k$ and $d_1, \ldots, d_k$ be integers such that
$1\le m_{\ell} \le d_{\ell}\le p-1$ for $\ell = 1, \ldots, k$. Let
$\{i_1, \ldots,i_k\}\subseteq \{1, \ldots, n\}$ and let $f:\GF(p)^n
\rightarrow \GF(p)$, $f(x_1, \ldots, x_n) = x_{i_1}^{d_1} \cdots
x_{i_k}^{d_k}$ and $t = x_{i_1}^{m_1} \cdots x_{i_k}^{m_k}$. We have
\[ f_t  = \sum_{j_1 =
m_1}^{d_1} \ldots \sum_{j_k = m_k}^{d_k} D(d_1,j_1,m_1) \ldots
D(d_k,j_k,m_k) x_{i_1}^{d_1-j_1}\ldots x_{i_k}^{d_k-j_k}
\]
where for any $1\le m \le j \le d$ we define \[ D(d,j,m) =
\sum_{\substack{(i_1, \ldots, i_{m}) \in \{1, 2, \ldots, j-m+1\}^{m}
\nonumber
\\i_1+\ldots + i_{m}=j}} \binom{d}{i_1, \ldots, i_{m}, d-j}.
\]
The coefficient of $x_{i_1}^{d_1-m_1}\ldots x_{i_k}^{d_k-m_k}$ in
$f_t$
is equal to $\prod_{\ell=1}^k
\displaystyle{\frac{d_{\ell}!}{(d_{\ell}-m_{\ell})!}}\ne 0$ and the
free term is equal to $\prod_{\ell=1}^k D(d_{\ell}, d_{\ell},
m_{\ell})$. The total degree of $f_t$
is $\sum_{\ell=1}^k (d_{\ell} - m_{\ell})$.

For the particular case of $m_1=d_1, \ldots, m_k=d_k$, we have
\[
%
f_t  =d_1! \ldots d_k!.
\]
\end{theorem}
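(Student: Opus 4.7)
The plan is to reduce the multi-variable case to the single-variable result (Theorem~\ref{thm:one-var-poly}) by exploiting the fact that the monomial $f = x_{i_1}^{d_1}\cdots x_{i_k}^{d_k}$ factorizes into pieces that involve disjoint variables, together with the linearity and commutativity of the finite difference operator.

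First I would use commutativity to group all differentiations with respect to the same variable together, so that $f_t = \Delta^{(m_1)}_{\mathbf{e}_{i_1},\ldots,\mathbf{e}_{i_1}}\cdots\Delta^{(m_k)}_{\mathbf{e}_{i_k},\ldots,\mathbf{e}_{i_k}} f$. The key observation is that $\Delta_{\mathbf{e}_{i_\ell}}$ only changes the argument $x_{i_\ell}$ and leaves all other coordinates untouched; hence when applied to a product of factors that each depend on only one of the distinct variables $x_{i_1},\ldots,x_{i_k}$, the operator acts solely on the factor $x_{i_\ell}^{d_\ell}$, while the other factors pull out as scalars (from the point of view of $x_{i_\ell}$). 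Iterating this observation I obtain
\[
f_t = \prod_{\ell=1}^k \left( \Delta^{(m_\ell)}_{\mathbf{e}_{i_\ell},\ldots,\mathbf{e}_{i_\ell}} x_{i_\ell}^{d_\ell} \right).
\]
At this point Theorem~\ref{thm:one-var-poly} applies to each factor individually (the hypothesis $m_\ell\le d_\ell \le p-1$ is exactly what is needed), and substituting its explicit expansion yields the claimed nested sum formula involving $D(d_\ell,j_\ell,m_\ell)$ after expanding the product.

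Having established the formula, the remaining assertions follow by inspection of the product. The leading monomial $x_{i_1}^{d_1-m_1}\cdots x_{i_k}^{d_k-m_k}$ is obtained by picking the $j_\ell=m_\ell$ contribution in each factor, so its coefficient is the product of the single-variable leading coefficients $\frac{d_\ell!}{(d_\ell-m_\ell)!}$; this is nonzero in $\GF(p)$ because each $d_\ell<p$. Dually, the constant term is obtained by taking $j_\ell=d_\ell$ in every factor and equals $\prod_\ell D(d_\ell,d_\ell,m_\ell)$. The total degree equals $\sum_\ell(d_\ell-m_\ell)$ because the leading coefficient is nonzero and no term in the product can have larger total degree. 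Finally, the specialization $m_\ell=d_\ell$ collapses the nested sum to a single constant, which by (\ref{eq:full-diff-one-var}) applied to each factor equals $d_1!\cdots d_k!$.

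The only place that requires care is the first step, justifying that differentiation with respect to $x_{i_\ell}$ distributes over the product of the other factors; this is a direct consequence of Definition~\ref{def:diff} since shifting only the $i_\ell$-th coordinate leaves the monomials in the other variables unchanged, so they factor out. Aside from this routine verification, the proof is purely a bookkeeping exercise combining Theorem~\ref{thm:one-var-poly} with the commutativity and linearity of $\Delta$, and I do not anticipate any substantive obstacle.
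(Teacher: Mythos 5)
Your proposal is correct and follows essentially the same route as the paper: the paper's proof is simply ``induction on $k$, applying Theorem~\ref{thm:one-var-poly}'', and your observation that $\Delta_{\mathbf{e}_{i_\ell}}$ acts only on the factor $x_{i_\ell}^{d_\ell}$ while the other factors pull out is exactly the inductive step, packaged as a product formula. The verification of the leading coefficient, free term, total degree, and the $m_\ell=d_\ell$ specialization all match the intended argument.
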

\begin{proof}
Induction on $k$, applying Theorem~\ref{thm:one-var-poly}. 
\end{proof}

When all the difference steps $h_i$ are equal to one, the evaluation
of the finite difference for a ``black box'' function $f$ using
Proposition~\ref{prop:diff-incl-excl-formula} becomes simpler. We
treat first the case when we differentiate w.r.t. a single variable:
\begin{proposition}\label{prop:incl-excl-one-var-p}
Let $f: R^n\rightarrow R$. Then
\begin{eqnarray}
\lefteqn{ f_{x_1^{m_1}} (x_1,
\ldots, x_n)=}\nonumber \\
\label{eq:incl-excl-one-var-p} && \Delta^{(m_1)}_{\mathbf{e_{1}},
\ldots, \mathbf{e_{1}}} f (x_1, \ldots, x_n) =  \sum_{i=0}^{m_1}
(-1)^{m_1-i}\binom{m_1}{i}f(x_1+i, x_2, \ldots, x_n).
\end{eqnarray}
If $R$ has characteristic $p$ and $m_1<p$ then all the coefficients
$\displaystyle \binom{m_1}{i}$ in the sum above are non-zero. If $f$
is a ``black box'' function, evaluating $f_{x_1^{m_1}}$ at one point
in its domain requires $m_1+1$ evaluations of $f$.
\end{proposition}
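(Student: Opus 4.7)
The plan is to derive equation~(\ref{eq:incl-excl-one-var-p}) as a direct specialisation of Proposition~\ref{prop:diff-incl-excl-formula}, and then dispatch the two supplementary claims with elementary arithmetic and a counting observation.

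First I would apply Proposition~\ref{prop:diff-incl-excl-formula} with $k=m_1$ and $\mathbf{a_1} = \mathbf{a_2} = \cdots = \mathbf{a_{m_1}} = \mathbf{e_1}$. Since all the difference vectors coincide, for any subset $\{i_1,\ldots,i_j\} \subseteq \{1,\ldots,m_1\}$ of size $j$ the sum $\mathbf{a_{i_1}} + \cdots + \mathbf{a_{i_j}}$ collapses to $j\mathbf{e_1}$, independently of which particular subset is chosen. Therefore each of the $\binom{m_1}{j}$ subsets of size $j$ contributes the identical value $f(x_1+j, x_2, \ldots, x_n)$ to the inner sum. Grouping terms of equal size and renaming the summation index $j$ to $i$ gives exactly the right-hand side of~(\ref{eq:incl-excl-one-var-p}).

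For the claim that $\binom{m_1}{i} \not\equiv 0 \pmod{p}$ whenever $m_1 < p$ and $0 \le i \le m_1$, I would write
\[
\binom{m_1}{i} \;=\; \frac{m_1 (m_1-1)\cdots(m_1-i+1)}{i!}.
\]
Every factor appearing in the numerator lies strictly between $0$ and $p$, as does every factor of $i!$, and $p$ is prime. Hence neither the numerator nor the denominator is divisible by $p$; since the quotient is an integer, it too is coprime to $p$ and therefore nonzero in any ring $R$ of characteristic $p$. (Alternatively, one could invoke Lucas' theorem, but the direct argument is shorter.)

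Finally, the evaluation-count claim is immediate: the right-hand side of~(\ref{eq:incl-excl-one-var-p}) is a linear combination of the $m_1+1$ values $f(x_1, x_2, \ldots, x_n), f(x_1+1, x_2, \ldots, x_n), \ldots, f(x_1+m_1, x_2, \ldots, x_n)$, with known scalar coefficients, so computing it from black-box access to $f$ requires exactly $m_1+1$ queries. There is no genuine obstacle in this proof; the only point worth articulating carefully is the subset-counting collapse in the first step, which is what converts the inclusion--exclusion formula of Proposition~\ref{prop:diff-incl-excl-formula} into a weighted sum with binomial coefficients.
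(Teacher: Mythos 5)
Your proposal is correct and follows essentially the same route as the paper: the paper's proof likewise derives the formula by specialising Proposition~\ref{prop:diff-incl-excl-formula} (you simply spell out the subset-collapse counting that the paper leaves implicit) and establishes $\binom{m_1}{i}\not\equiv 0 \pmod p$ by the same observation that all factors involved are smaller than the prime $p$.
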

\begin{proof}
The formula follows from
Proposition~\ref{prop:diff-incl-excl-formula}. Since $0\le i \le
m_1<p$ and $p$ is prime, $\binom{m_1}{i}$ cannot be divisible by
$p$, so it is non-zero in a field of characteristic $p$. 
\end{proof}

We now look at the situation where we differentiate w.r.t. several
variables $x_{i_1}, \ldots, x_{i_k}$.
\begin{proposition} \label{prop:black-box-eval-p} Let $f:R^n \rightarrow R$ and
 $t = \prod_{\ell=1}^k x_{i_{\ell}}^{m_{\ell}}$.  Then
\[f_t(\mathbf{x}) = \sum_{j_1=0}^{m_1} \ldots \sum_{j_k=0}^{m_k}
(-1)^{\sum_{\ell=1}^k (m_{\ell}-j_{\ell})} \binom{m_1}{j_1} \cdots
\binom{m_k}{j_k} f(\ldots, x_{i_1}+j_1, \ldots, x_{i_k}+j_k,
\ldots).
\]
If $R$ has characteristic $p$ and all $m_{\ell}<p$, then all the
coefficients $\binom{m_1}{j_1} \cdots \binom{m_k}{j_k}$ in the sum
above are non-zero. If $f$ is a ``black box'' function, one
evaluation of $f_t$ needs $\prod_{\ell=1}^k (m_{\ell}+1)$
evaluations of $f$. In particular evaluating $f_t$ for $x_{i_{\ell}}
=0$, $\ell=1, \ldots, k$ we obtain:
\begin{equation}\label{eq:evaluation-diff-p}
\sum_{j_1=0}^{m_1} \ldots \sum_{j_k=0}^{m_k} (-1)^{\sum_{\ell=1}^k
(m_{\ell}-j_{\ell})} \binom{m_1}{j_1} \cdots \binom{m_k}{j_k}
f(\ldots, j_1, \ldots, j_k, \ldots)
\end{equation}
with $j_1, \ldots, j_k$ in positions $i_1, \ldots, i_k$
respectively.
\end{proposition}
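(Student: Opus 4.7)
The plan is to reduce the multivariable statement to iterated application of the single-variable formula of Proposition~\ref{prop:incl-excl-one-var-p}. First I would recall from the text that the finite difference operator is commutative and associative, so the composition defining $f_t$ can be regrouped freely: differentiating $m_1$ times w.r.t.~$x_{i_1}$, then $m_2$ times w.r.t.~$x_{i_2}$, and so on, always gives the same function regardless of order.

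Next I would proceed by induction on $k$. The base case $k=1$ is exactly Proposition~\ref{prop:incl-excl-one-var-p} (applied to the variable $x_{i_1}$ rather than $x_1$, which is legitimate since the variable name plays no role in the derivation). For the induction step, assume the formula for $k-1$ variables and apply Proposition~\ref{prop:incl-excl-one-var-p} once more to the result, differentiating $m_k$ times w.r.t.~$x_{i_k}$. Because the outer summation from the inductive hypothesis depends linearly on $f$ and the operator $\Delta^{(m_k)}_{\mathbf{e}_{i_k}, \ldots, \mathbf{e}_{i_k}}$ is linear, it commutes past the sum, and each term $f(\ldots, x_{i_1}+j_1, \ldots, x_{i_{k-1}}+j_{k-1}, \ldots)$ expands into the weighted sum $\sum_{j_k=0}^{m_k} (-1)^{m_k - j_k}\binom{m_k}{j_k} f(\ldots, x_{i_k}+j_k, \ldots)$. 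Combining the signs and binomial factors gives the stated product formula.

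The remaining three claims are then immediate. For the non-vanishing of the coefficients, since each $m_\ell < p$ and $p$ is prime, $\binom{m_\ell}{j_\ell}$ has numerator with no factor of $p$, so it is a unit in $R$; the product of units is a unit, hence nonzero. For the black-box evaluation count, the sum has $\prod_{\ell=1}^k (m_\ell+1)$ terms, and each term is a single call to $f$ at a shifted point, so that many evaluations suffice (and are needed in general). Finally, the evaluation at $x_{i_\ell}=0$ is obtained by direct substitution into the explicit formula, yielding equation~(\ref{eq:evaluation-diff-p}).

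I do not expect any real obstacle: the proof is a routine induction, and the only things to be careful about are bookkeeping the nested indices and invoking commutativity of $\Delta$ to justify treating each group of differentiations independently via the single-variable result.
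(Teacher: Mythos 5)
Your proof is correct. The paper actually states this proposition without any proof; your induction on $k$, bootstrapping from the single-variable case (Proposition~\ref{prop:incl-excl-one-var-p}, itself reduced in the paper to Proposition~\ref{prop:diff-incl-excl-formula}) together with commutativity and linearity of $\Delta$, is exactly the natural completion, and your remark that each $\binom{m_\ell}{j_\ell}$ is a unit (not merely non-zero) is the right way to guarantee the product is non-zero in a ring of characteristic $p$ that may contain zero divisors.
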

We note that in terms of the time complexity of evaluating $f_t$, it
is now not only the total degree of $t$ that matters (as in the
binary case), but also the exponents of each variable. For a given
number $m$ of differentiations (i.e. $t$ of total degree $m$), the
smallest time complexity is achieved when $t$ contains only one
variable, i.e. $t=x_{i_1}^{m}$. Among all $t$ of total degree $m$
that contain $k$ variables, the best time complexity is achieved
when $t$ has degree one in each but one of its variables, e.g.
$t=x_{i_1}^{m-k+1}x_{i_2}\ldots x_{i_k}$.

\begin{remark} We saw that in the binary case, differentiating once w.r.t.
each of the variables $x_{i_1}, \ldots x_{i_k}$ is equivalent to
summing $f$ evaluated over a ``cube'' consisting of all the 0/1
combinations for the variables $x_{i_1}, \ldots x_{i_k}$. According
to Proposition~\ref{prop:black-box-eval-p}, the analogue of the
``cube'' will now be a $k$-dimensional grid/mesh with sides of
``length'' $m_1, m_2, \ldots, m_k$. Namely each variable
$x_{i_{\ell}}$ w.r.t. which we differentiate will have increments of
$0, 1, \ldots, m_{\ell}$ and each term in the sum has alternating
signs as well as being multiplied by binomial coefficients.
\end{remark}


\subsection{Fundamental theorem of the cube attack generalised to $\GF(p)$}
\label{sec:fundam-cube-p}

We will use the notation $f_t = \Delta^{(m_1)}_{\mathbf{e_{i_1}},
\ldots, \mathbf{e_{i_1}}} \ldots \Delta^{(m_k)}_{\mathbf{e_{i_k}},
\ldots, \mathbf{e_{i_k}}} f$ with $t = x_{i_1}^{m_1} \cdots
x_{i_k}^{m_k}$ as in the previous section.

Factoring out $t$, we can write $f$ as
 \[f(x_1, \ldots, x_n) =
 t f_{S(t)}(x_1, \ldots, x_n)+ r(x_1, \ldots, x_n)\]
 where $f_{S(t)}$ and $r$ are uniquely determined such as none of
 the terms in $r$ is divisible by $t$.

We can already give a bound on the degree of $f_t$:
\begin{proposition}
With the notations above, we have $\deg(f_t) \le \deg(f_{S(t)})$. In
particular, if $\deg(t) = \deg(f)-1$ then $f_t$ is linear or
constant.
\end{proposition}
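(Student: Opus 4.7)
The plan is to exploit linearity of the iterated difference operator and reduce to a monomial-by-monomial analysis. Writing $f = \sum_\alpha c_\alpha x^\alpha$, linearity gives $f_t = \sum_\alpha c_\alpha (x^\alpha)_t$, so it suffices to control $\deg((x^\alpha)_t)$ for each monomial and compare with the monomials appearing in $f_{S(t)}$.

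First I would split the monomials of $f$ into two classes. If $x^\alpha$ is \emph{not} divisible by $t$, then there exists some $\ell$ with $\alpha_{i_\ell}<m_\ell$; by Proposition~\ref{prop:deg-decreases-by-at least-one} the $m_\ell$-fold differentiation w.r.t.\ $x_{i_\ell}$ annihilates this monomial, so $(x^\alpha)_t \equiv 0$. Such monomials also contribute nothing to $f_{S(t)}$ by its definition, so they are harmless on both sides of the inequality.

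Next I handle the monomials $x^\alpha$ divisible by $t$. Write $x^\alpha = t \cdot x^\beta$, where $\beta_{i_\ell} = \alpha_{i_\ell}-m_\ell$ for $\ell = 1,\ldots,k$ and $\beta_j = \alpha_j$ for $j\notin I$. Since the finite difference operators w.r.t.\ the variables $x_{i_\ell}$ do not affect the remaining variables, Theorem~\ref{thm:several-var-fundamental-result-one-monomial} applied to the $I$-part of the monomial shows that $(x^\alpha)_t$ is a polynomial of total degree exactly $\sum_\ell (\alpha_{i_\ell}-m_\ell) + \sum_{j\notin I}\alpha_j = \deg(x^\beta)$. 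On the other hand $x^\beta$ is precisely the monomial that this $x^\alpha$ contributes (with coefficient $c_\alpha$) to $f_{S(t)}$. Taking the maximum over all such $\alpha$ yields $\deg(f_t)\le \max_{\alpha:\,t\mid x^\alpha}\deg(x^\beta) = \deg(f_{S(t)})$.

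For the ``in particular'' clause, I would simply observe that every monomial $x^\beta$ appearing in $f_{S(t)}$ comes from a monomial $x^\alpha = t\cdot x^\beta$ of $f$, so $\deg(x^\beta) = \deg(x^\alpha) - \deg(t) \le \deg(f) - \deg(t)$. Under the hypothesis $\deg(t) = \deg(f)-1$ this gives $\deg(f_{S(t)})\le 1$, and combining with the first part yields $\deg(f_t)\le 1$, i.e.\ $f_t$ is linear or constant. No step is a serious obstacle; the only thing that requires care is making sure the non-$I$ variables are handled correctly when invoking Theorem~\ref{thm:several-var-fundamental-result-one-monomial}, which is stated for monomials involving only the $I$-variables, but this is immediate since those variables pass through the $\Delta_{\mathbf{e_{i_\ell}}}$ operators untouched.
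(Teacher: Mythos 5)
Your proof is correct and follows essentially the same route as the paper's: reduce to monomials by linearity of $\Delta$, observe that monomials not divisible by $t$ are annihilated (and contribute nothing to $f_{S(t)}$), and invoke Theorem~\ref{thm:several-var-fundamental-result-one-monomial} for the divisible ones before taking the maximum over monomials. Your explicit handling of the variables outside $I$ is a detail the paper leaves implicit, but it does not change the argument.
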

\begin{proof}
When $f$ is a monomial divisible by $t$, we have $\deg(f_t) =
\deg(f_{S(t)})$ by
Theorem~\ref{thm:several-var-fundamental-result-one-monomial}. If
$f$ is a monomial not divisible by $t$, then $f_t=0$.  Finally, for
a general $f$ we use the linearity of the $\Delta$ operator, the
fact that $(f+g)_{S(t)} = f_{S(t)} + g_{S(t)}  $ and the fact that
$\deg(f+g) \le \deg(f) + \deg(g)$, for any polynomials $f$ and $g$.
\end{proof}
The result above is already sufficient for a cube attack. However,
we will give a more refined result shortly, in order to give an
analogue of the main theorem of the classical cube attack (see
Theorems~\ref{thm:cube-classical},
\ref{thm:cube-classical-reformulated} and
\ref{thm:fundamental-p-once}). At first sight we might expect
Theorem~\ref{thm:fundamental-p-once}  to hold here too, namely we
might expect that $f_t(x_1, \ldots, x_n)$ evaluated at $x_{i_j}=0$
for $j = 1, \ldots, k$ equals $f_{S(t)}(x_1, \ldots, x_n)$ evaluated
at $x_{i_j}=1$ for $j = 1, \ldots, k$. However this is not true in
general, as the following counterexample shows:
\begin{example}\label{ex:not-direct-generalisation}
We continue Example~\ref{ex:diff} for $f(x_1, x_2, x_3, x_4) = x_1^5
x_2 + x_1^4 x_3 x_4 + x_4^6$.

We computed $f_{x_1}(x_1, x_2, x_3, x_4)$ in Example~\ref{ex:diff}.
Evaluating at $x_1=0$ we obtain $f_{x_1}(0, x_2, x_3, x_4) =
x_2+x_3x_4$.
On the other hand $f_{S(x_1)}(x_1, x_2, x_3, x_4) =
x_1^4 x_2 + x_1^3 x_3 x_4$. Evaluating at $x_1=1$ we obtain
$f_{S(x_1)}(1, x_2, x_3, x_4) = x_2 +  x_3 x_4$.
Hence we verified that $f_{x_1}(0, x_2, x_3, x_4) = f_{S(x_1)}(1,
x_2, x_3, x_4)$ as expected by Theorem~\ref{thm:fundamental-p-once}.

Differentiating again w.r.t. $x_1$ and evaluating $ f_{x_1^2}(x_1,
x_2, x_3, x_4) $ at $x_1=0$ gives $f_{x_1^2}(0, x_2, x_3, x_4)  = 30
x_2 + 14
 x_3x_4$.
On the other hand $f_{S(x_1^2)}(x_1, x_2, x_3, x_4) = x_1^3 x_2 +
x_1^2 x_3 x_4$, which evaluated at $x_1=1$ gives $ f_{S(x_1^2)}(1,
x_2, x_3, x_4)  =  x_2 + x_3x_4 $

Hence $f_{x_1^2}(0, x_2, x_3, x_4) \neq f_{S(x_1^2)}(1, x_2, x_3,
x_4)$, so Theorem~\ref{thm:fundamental-p-once} cannot be extended in
its current form to the case when we differentiate more than once
w.r.t. one variable. However, note that the two quantities computed
here do contain the same monomials.

Finally, if we differentiate 5 times with respect to $x_1$ we
obtain: $f_{x_1^5}(x_1, x_2, x_3, x_4) = 27 x_2$, whereas
$f_{S(x_1^5)}(x_1, x_2, x_3, x_4) = x_2 $ so again the two
polynomials do not coincide; however  they only differ by
multiplication by a constant.
\end{example}

The correct generalisation of the main theorem of the classical cube
attack is the following:

\begin{theorem}\label{thm:several-var-fundamental-result}
Let $f:\GF(p)^n \rightarrow \GF(p)$ be a polynomial function and $t
= \prod_{j=1}^k x_{i_j}^{m_j}$. Denote
\[
f_t(x_1, \ldots, x_n) = \Delta^{(m_1)}_{\mathbf{e_{i_1}}, \ldots,
\mathbf{e_{i_1}}} \ldots \Delta^{(m_k)}_{\mathbf{e_{i_k}}, \ldots,
\mathbf{e_{i_k}}} f(x_1, \ldots, x_n).
\]
Write $f$ as
\[
f(\mathbf{x}) = t f_{S(t)}(\mathbf{x}) + r(\mathbf{x})
\]
so that none of the monomials in $r$ are divisible by $t$.

Denote by $ \mathbf{v}$ the $n$-tuple having values of 1 in the
positions $i_1, \ldots, i_k$ and indeterminates elsewhere, and by $
\mathbf{u}$ the $n$-tuple having values of 0 in the positions $i_1,
\ldots, i_k$ and indeterminates elsewhere.

 Write
$f_{S(t)} = t_1 g_1 + \ldots + t_ug_u$, where $g_i$ are polynomials
that do not depend on any of the variables $x_{i_1}, \ldots,
x_{i_k}$ and $t_1, \ldots, t_u$ are all the distinct terms in the
variables $x_{i_1}, \ldots, x_{i_k}$ that appear in $f_{S(t)}$.

Then there are constants $c_1, \ldots, c_{u} \in \GF(p)$ such that
$f_t(\mathbf{u})$ equals $c_1g_1 + \ldots + c_ug_u$ (which can also
be viewed as $c_1t_1g_1 + \ldots + c_ut_ug_u$ evaluated at
$\mathbf{v}$). The exact values for the constants $c_i$ can be
determined as follows: if $t_i= x_{i_1}^{\ell_1}\ldots
x_{i_k}^{\ell_k}$, then $c_i = D(m_1+\ell_1, m_1+\ell_1, m_1 )
\ldots D(m_k+\ell_k, m_k+\ell_k, m_k )$ where $D()$ is as defined in
Theorem~\ref{thm:several-var-fundamental-result-one-monomial}.

In particular if $f_{S(t)}$ does not depend on any of the variables
$x_{i_1}, \ldots, x_{i_k}$ then
\[f_t(\mathbf{x}) = m_1! \ldots m_k! f_{S(t)}.\]
\end{theorem}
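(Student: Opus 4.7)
The plan is to invoke linearity of both the operator $f \mapsto f_t$ and the assignment $f \mapsto f_{S(t)}$ to reduce to the case where $f$ consists of a single monomial, and then apply Theorem~\ref{thm:several-var-fundamental-result-one-monomial} directly. First I would write an arbitrary monomial summand of $f$ as $x_{i_1}^{a_1}\cdots x_{i_k}^{a_k}\,g$, where $g$ depends only on the variables outside $\{x_{i_1},\ldots,x_{i_k}\}$, and split cases according to whether $t$ divides this monomial. If $a_j < m_j$ for some $j$, then Proposition~\ref{prop:deg-decreases-by-at least-one} forces the $m_j$-th differentiation w.r.t.\ $x_{i_j}$ to annihilate the monomial, so its contribution to $f_t$, and hence to $f_t(\mathbf{u})$, is zero; the same monomial also contributes nothing to $f_{S(t)}$, so both sides match.

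The substantive case is when $a_j = m_j + \ell_j$ with $\ell_j \ge 0$ for all $j$. The monomial's contribution to $f_{S(t)}$ is then $x_{i_1}^{\ell_1}\cdots x_{i_k}^{\ell_k}\,g$, which is $t_i\,g$ with $t_i = x_{i_1}^{\ell_1}\cdots x_{i_k}^{\ell_k}$. Applying Theorem~\ref{thm:several-var-fundamental-result-one-monomial} (with $d_j = m_j+\ell_j$) and pulling the factor $g$ out of the $\Delta$-operators (since they do not involve its variables), $f_t$ is the nested double sum
\[
\Bigl(\sum_{j_1=m_1}^{m_1+\ell_1}\!\!\!\cdots\!\!\!\sum_{j_k=m_k}^{m_k+\ell_k} D(m_1+\ell_1,j_1,m_1)\cdots D(m_k+\ell_k,j_k,m_k)\,x_{i_1}^{m_1+\ell_1-j_1}\cdots x_{i_k}^{m_k+\ell_k-j_k}\Bigr)\,g.
\]
Evaluating at $\mathbf{u}$ sets every $x_{i_j}$ to $0$, so only the term with $j_j = m_j+\ell_j$ for all $j$ survives, leaving $D(m_1+\ell_1,m_1+\ell_1,m_1)\cdots D(m_k+\ell_k,m_k+\ell_k,m_k)\,g = c_i\,g$, which is exactly $c_i t_i g$ evaluated at $\mathbf{v}$ (since $t_i(\mathbf{v}) = 1$).

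Finally I would reassemble by linearity. Across all monomials of $f$, the contributions of those not divisible by $t$ vanish on both sides; the remaining monomials are grouped according to their exponent tuple $(\ell_1,\ldots,\ell_k)$, and each group pools into a single term $t_i g_i$ in $f_{S(t)}$, where $g_i$ is the sum of the corresponding $g$-parts. The crucial structural observation is that the constant $c_i$ depends only on the tuple $(\ell_1,\ldots,\ell_k)$ and the fixed multiplicities $(m_1,\ldots,m_k)$, not on $g$; hence every monomial in the same group is multiplied by the same scalar $c_i$ when passing to $f_t(\mathbf{u})$, yielding $f_t(\mathbf{u}) = \sum_i c_i g_i$ as claimed. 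The special final case $f_{S(t)}$ independent of $x_{i_1},\ldots,x_{i_k}$ corresponds to all $\ell_j = 0$, giving $c = D(m_1,m_1,m_1)\cdots D(m_k,m_k,m_k) = m_1!\cdots m_k!$ via the last identity of Theorem~\ref{thm:one-var-poly}.

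The main point requiring care is precisely the linearity bookkeeping in the last step: it is tempting to worry that different monomials might produce the same $t_i$ with incompatible scalar multipliers. What rescues the argument is the invariance noted above, that the $c_i$ coefficient is a function only of the exponents in $t_i$ and the prescribed multiplicities $m_j$, so pooling monomials into $t_i g_i$ pulls a common $c_i$ through the sum. Everything else is an application of Theorem~\ref{thm:several-var-fundamental-result-one-monomial} plus the vanishing provided by Proposition~\ref{prop:deg-decreases-by-at least-one}.
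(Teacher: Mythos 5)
Your proof is correct and follows essentially the same route as the paper's own (very terse) proof: reduce to a single monomial by linearity of $\Delta$ and of $f\mapsto f_{S(t)}$, kill the monomials not divisible by $t$ via Proposition~\ref{prop:deg-decreases-by-at least-one}, and read off the surviving coefficient from the free term in Theorem~\ref{thm:several-var-fundamental-result-one-monomial}. Your explicit bookkeeping of why the scalar $c_i$ depends only on the exponent tuple $(\ell_1,\ldots,\ell_k)$ and the multiplicities $(m_1,\ldots,m_k)$, so that pooling monomials into $t_ig_i$ is legitimate, is exactly the detail the paper leaves implicit.
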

\begin{proof}
We first use
Theorem~\ref{thm:several-var-fundamental-result-one-monomial} for
individual monomials and then the linearity of the $\Delta$
operator.
 \end{proof}

Again, for the cube attack we are interested in the cases where
$f_t(\mathbf{u})$
is linear:

\begin{corollary}
With the notations of
Theorem~\ref{thm:several-var-fundamental-result},
\[ \deg(f_t(\mathbf{u})) \le \deg(f_{S(t)}(\mathbf{v})).\]
The latter is also equal to the total degree of
$f_{S(t)}(\mathbf{x})$ in the variables $\{x_1, \ldots, x_n\}
\setminus \{ x_{i_1}, \ldots, x_{i_k}  \}$. If
$\deg(f_{S(t)}(\mathbf{v}))=1$ then $f_t(\mathbf{u})$ is linear or
constant.
\end{corollary}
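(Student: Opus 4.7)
The plan is to deduce the corollary directly from Theorem~\ref{thm:several-var-fundamental-result}, which already provides the explicit form $f_t(\mathbf{u}) = c_1 g_1 + \ldots + c_u g_u$. Here each $g_j$ is a polynomial only in the variables outside $I$, and each $c_j \in \GF(p)$ is a constant (given by the explicit product of $D(\cdot,\cdot,\cdot)$ values in the theorem). Since multiplying by a constant of $\GF(p)$ cannot raise the polynomial degree, this representation immediately yields
\[
\deg(f_t(\mathbf{u})) \;\le\; \max_{1 \le j \le u} \deg(g_j).
\]

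Next I would evaluate the right-hand side of the claimed inequality. Writing $f_{S(t)}(\mathbf{x}) = t_1 g_1 + \ldots + t_u g_u$ and substituting $\mathbf{v}$, every $t_j$, being a monomial purely in $x_{i_1}, \ldots, x_{i_k}$, becomes $1$, so $f_{S(t)}(\mathbf{v}) = g_1 + \ldots + g_u$. Because the $t_j$ are \emph{distinct} monomials in the cube variables, the monomials appearing in $t_j g_j$ for different $j$ involve disjoint patterns of the cube-variable exponents, so no monomials of $f_{S(t)}(\mathbf{x})$ cancel between different summands; hence the total degree of $f_{S(t)}(\mathbf{x})$ measured only in the non-cube variables equals $\max_j \deg(g_j)$, which is precisely the second sentence of the corollary. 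Chaining this with the inequality above produces $\deg(f_t(\mathbf{u})) \le \deg(f_{S(t)}(\mathbf{v}))$, and the final assertion is the immediate specialisation to the case $\deg(f_{S(t)}(\mathbf{v}))=1$, whereby $\deg(f_t(\mathbf{u}))\le 1$.

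The only step that requires a little thought is the identification of $\deg(g_1 + \ldots + g_u)$ with $\max_j \deg(g_j)$: once the $t_j$'s collapse to $1$, collisions between the leading monomials of distinct $g_j$'s could, \emph{a priori}, cause the degree to drop. I expect this to be the main (mild) obstacle, and would resolve it by reading $\deg(f_{S(t)}(\mathbf{v}))$ throughout as the non-cube total degree of $f_{S(t)}(\mathbf{x})$, which is precisely the content of the second sentence of the corollary. With this reading, no information is lost by the substitution and the proof reduces to citing Theorem~\ref{thm:several-var-fundamental-result} together with the trivial fact that scalar multiples do not increase degree.
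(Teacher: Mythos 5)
Your proof is correct and follows the intended route: the paper states this corollary without a proof, treating it as an immediate consequence of Theorem~\ref{thm:several-var-fundamental-result}, and your derivation via $f_t(\mathbf{u}) = c_1 g_1 + \cdots + c_u g_u$ with constant $c_j$ (so that $\deg(f_t(\mathbf{u})) \le \max_j \deg(g_j)$, which is the non-cube total degree of $f_{S(t)}$ since the distinct terms $t_j$ prevent cancellation before substitution) is exactly that argument. Your observation that, after substituting $\mathbf{v}$, cancellations among the leading monomials of the $g_j$ could make the literal degree of $f_{S(t)}(\mathbf{v})$ drop below $\max_j \deg(g_j)$ --- so the bound should really be read against the non-cube total degree of $f_{S(t)}(\mathbf{x})$, as in the corollary's second sentence --- is a genuine subtlety that the paper glosses over, and your resolution of it is the right one.
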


\subsection{Proposed Algorithm for the cube attack in
$\GF(p)$}\label{sec:algorithm}

In this section we give more details of the algorithm, drawing on
the results from previous sections. The main idea of our proposed
attack is that when the degree in one variable is higher than one,
we can differentiate w.r.t. that variable repeatedly, unlike the
cube attacks described in the Section~\ref{sec:cube-classical},
which use differentation at most once for each variable.

We are given a cryptographic ``black box'' function $f(v_1, \ldots
v_m, x_1, \ldots, x_n)$ with $v_i$ being public variables and $x_i$
being secret variables.

{\bf Preprocessing phase}

\begin{enumerate}
  \item Choose a term in the public variables, $t=v_{i_1}^{m_1}\cdots
v_{i_k}^{m_k}$, with $1\le m_i \le p-1$.

  \item Using formula~(\ref{eq:evaluation-diff-p}) in
Proposition~\ref{prop:black-box-eval-p} we evaluate $f_t(\mathbf{0},
\mathbf{x})$ for several choices of the secret variables
$\mathbf{x}$, in order to decide whether, with reasonably high
probability, the total degree of $f_t(\mathbf{0}, \mathbf{x})$ in
$\mathbf{x}$ equals one. (For this one can use the textbook
definition of linearity; namely, for various values of $a,b \in
\GF(p)$ and $\mathbf{y}, \mathbf{z} \in \GF(p)^n$ test whether $a
(f_t(\mathbf{0}, \mathbf{y})-f_t(\mathbf{0}, \mathbf{0})) + b
(f_t(\mathbf{0}, \mathbf{z})-f_t(\mathbf{0}, \mathbf{0})) =
f_t(\mathbf{0}, a\mathbf{y} + b\mathbf{z})-f_t(\mathbf{0},
\mathbf{0})$; in $\GF(p)$ with $p$ large, we will need in general
much fewer linearity tests than in the binary case,
see~\cite{KauRon04}; one can at the same time check whether
$f_t(\mathbf{0}, \mathbf{x})$ is non-constant).

  \item If the decision above is "yes", we determine $f_t(\mathbf{0},
\mathbf{x})$ explicitly, as $f_t(\mathbf{0}, \mathbf{x}) = c_0+
\sum_{i=1}^n c_ix_i$ where $c_0 = f_t(\mathbf{0}, \mathbf{0})$ and
$c_i = f_t(\mathbf{0}, \mathbf{e}_i)-c_0$; we store $(t, c_0, c_1,
\ldots, c_n)$.

  \item Repeat the steps above for different values of $t$ until one obtains
$n$ linearly independent stored tuples $(c_1, \ldots, c_n)$, or
until one runs out of time/memory.
\end{enumerate}

For the heuristic of choosing $t$ one could take into account the
computational cost for a term $t$, see
Proposition~\ref{prop:black-box-eval-p} and the comment following
it. However a full heuristic is beyond the scope of this paper. A
number of optimisations have been proposed for the binary cube
attack; many of them can be transferred to the modulo $p$ case, but
again, this is beyond the scope of this paper.

{\bf Online phase}
\begin{enumerate}
  \item For each $(t, c_0, c_1, \ldots, c_n)$
stored in the preprocessing phase, compute
$f_t(\mathbf{0},\mathbf{x})$ (with $\mathbf{x}$ being now unknown)
 using formula~(\ref{eq:evaluation-diff-p}) in
Proposition~\ref{prop:black-box-eval-p}. Form the linear equation: $
c_1 x_1+ \ldots, +c_n x_n+c_0 =f_t(\mathbf{0},\mathbf{x})$.

  \item  Solve
the system of linear equations thus obtained, determining the secret
variables $x_1, \ldots, x_n$. If the preprocessing phase only
produced $s<n$ equations, then not all the secret variables can de
determined, we would need to do an exhaustive search for $n-s$ of
them.
\end{enumerate}

\begin{remark}
Let $\ell$ be the length of the binary representation of $p$. We can
view each bit of an element in $\GF(p)$ as one binary variable. If
$f$ is a function of $n$ variables over $\GF(p)$, we can also view
it as $\ell$ binary functions in $\ell n$ binary variables. We could
therefore apply the classical (binary) cube attack on these
functions. A rough estimate suggests that the running time for
corresponding cubes will be approximately the same. (Differentiating
$p-1$ times with respect to one variable $x_i$ in $\GF(p)$ takes $p$
evaluations of $f$; differentiating once w.r.t. each of the binary
variables that are components of $x_i$ will take $2^{\ell}$
evaluations of $f$; we have $p \approx 2^{\ell}$.) The chances of
success on a particular cube bear no easy relationship between the
two approaches, because the degree of $f$ and the degrees of the
$\ell$ binary functions are not related in a simple way.

Hence we would argue that in general one cannot tell which of the
attacks will work better, so one should try both. If the cipher has
a structure that would suggest that the degree as polynomial over
$\GF(p)$ is relatively low, then a cube attack over $\GF(p)$ should
certainly be an approach to consider.
\end{remark}

\section{Generalisations to $\GF(p^m)$} \label{sec:proposed-cube-larger-field}

In this section we take our generalisation further, to arbitrary
finite fields $\GF(p^m)$. An important particular case would be
$\GF(2^m)$, as many cryptographic algorithms include operations over
a field of this type.

\subsection{Preliminaries}
We need some known results regarding the values of binomial
coefficients and multinomial coefficients in fields of finite
characteristic.

\begin{theorem}(Kummer's Theorem, \cite[p. 115]{Kum52}) Let $n\ge k \ge 0$
be integers and $p$ a prime. Let $j$ be the highest exponent for
which $\binom{n}{k}$ is divisible by $p^j$. Then $j$ equals the sum
of carries when adding $k$ and $n-k$ as numbers written in base $p$.
\end{theorem}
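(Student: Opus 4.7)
The plan is to prove Kummer's theorem by computing the $p$-adic valuation of $\binom{n}{k}$ via Legendre's formula and then interpreting the resulting expression combinatorially in terms of base-$p$ carries.

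First, I would recall Legendre's formula, which states that for any prime $p$ and positive integer $n$,
\[
v_p(n!) = \sum_{i\ge 1} \left\lfloor \frac{n}{p^i} \right\rfloor = \frac{n - s_p(n)}{p-1},
\]
where $v_p$ denotes the $p$-adic valuation and $s_p(n)$ is the sum of the base-$p$ digits of $n$. The second equality is a short computation: writing $n = \sum_i d_i p^i$ with $0 \le d_i \le p-1$, one sums the geometric series for each digit. Since $\binom{n}{k} = n!/(k!(n-k)!)$, applying this formula three times gives
\[
v_p\!\binom{n}{k} \;=\; \frac{s_p(k) + s_p(n-k) - s_p(n)}{p-1}.
\]

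The heart of the proof is then the identity
\[
s_p(k) + s_p(n-k) - s_p(n) \;=\; (p-1)\,c,
\]
where $c$ is the number of carries in the base-$p$ addition of $k$ and $n-k$. To establish it, I would analyse the schoolbook addition algorithm position by position. Let $k = \sum_i a_i p^i$ and $n-k = \sum_i b_i p^i$ in base $p$, and let $c_i \in \{0,1\}$ be the carry into position $i$ (with $c_0 = 0$). The digit of $n = k+(n-k)$ at position $i$ is $a_i + b_i + c_i - p\,c_{i+1}$. Summing over $i$ and rearranging,
\[
s_p(n) \;=\; s_p(k) + s_p(n-k) + \sum_i c_i - p\sum_i c_{i+1} \;=\; s_p(k) + s_p(n-k) - (p-1)\sum_{i\ge 1} c_i,
\]
since $c_0 = 0$, and $\sum_{i \ge 1} c_i = c$ is exactly the total number of carries. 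This immediately yields the claimed identity, and substituting back gives $v_p\binom{n}{k} = c$, which is Kummer's theorem.

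The only genuinely delicate step is the carry-counting identity; the key observation is that a carry transforms a local contribution of $p$ into a local contribution of $1$ in the next position, losing exactly $p-1$ from the digit sum, and the rest is careful bookkeeping with the boundary condition $c_0 = 0$. Everything else is either Legendre's formula (which can be cited or proved in one line) or algebra.
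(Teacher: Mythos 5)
Your proof is correct. Note that the paper itself offers no proof of this statement: it is quoted as a classical result with a citation to Kummer's 1852 paper, so there is no in-paper argument to compare against. Your route --- computing $v_p\bigl(\binom{n}{k}\bigr)$ via Legendre's formula in the digit-sum form $v_p(n!) = (n - s_p(n))/(p-1)$ and then proving the carry identity $s_p(k) + s_p(n-k) - s_p(n) = (p-1)c$ by tracking the schoolbook addition position by position --- is the standard modern proof, and every step checks out: the per-position relation (digit of $n$ at position $i$) $= a_i + b_i + c_i - p\,c_{i+1}$ is exactly right, and the boundary condition $c_0 = 0$ is handled correctly in the telescoping. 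The only cosmetic point is that you state Legendre's formula for positive $n$ but apply it to $k!$ and $(n-k)!$, which may be $0!$; the formula holds trivially there since $v_p(0!) = 0 = (0 - s_p(0))/(p-1)$, so nothing breaks.
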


Kummer's theorem has been generalised to multinomials by various
authors (see for example~\cite{DodPee91} and citations therein).

\begin{theorem}
Let $d, k_1, k_2, \ldots, k_s$ be integers such that $\sum_{i=1}^s
k_i = d$ and $k_i\ge 0$ and let $p$ be a prime. Let $j$ be the
highest exponent for which $\binom{d}{k_1, k_2, \ldots, k_s}$ is
divisible by $p^j$. Then $j$ equals the sum of all the carries when
adding all of $k_1, k_2, \ldots, k_s$ as numbers written in base
$p$.
\end{theorem}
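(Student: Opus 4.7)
The plan is to reduce the multinomial statement to repeated applications of Kummer's theorem for ordinary binomials. Introduce the partial sums $S_0 = 0$ and $S_i = k_1 + k_2 + \cdots + k_i$ for $i = 1, \ldots, s$, so that $S_s = d$. A direct cancellation of factorials gives the identity
\[
\binom{d}{k_1, k_2, \ldots, k_s} \;=\; \prod_{i=2}^{s} \binom{S_i}{k_i},
\]
since each factor $\binom{S_i}{k_i} = S_i!/(k_i!\, S_{i-1}!)$ telescopes.

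Next I would apply Kummer's theorem to each factor. Writing $v_p$ for the $p$-adic valuation, the theorem gives
\[
v_p\!\left(\binom{S_i}{k_i}\right) \;=\; c_i,
\]
where $c_i$ is the number of carries produced when adding $S_{i-1}$ and $k_i$ in base $p$. Summing over $i$ and using the multiplicativity of $v_p$,
\[
v_p\!\left(\binom{d}{k_1, \ldots, k_s}\right) \;=\; \sum_{i=2}^{s} c_i,
\]
which is precisely the total number of carries produced when the numbers $k_1, \ldots, k_s$ are accumulated one by one in base $p$.

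The one step that still needs justification is that this sequential carry-count equals the carry-count of the simultaneous base-$p$ addition $k_1 + \cdots + k_s$ referred to in the statement. I would prove this by induction on $s$: the case $s=2$ is Kummer's theorem itself, and for the inductive step one notes that, position by position, the digit sum together with the incoming carry in the column-wise addition of $k_1, \ldots, k_{s+1}$ can be obtained by first accumulating $k_1, \ldots, k_s$ (producing $S_s$ and some carries) and then adding $k_{s+1}$ to $S_s$ (producing the remaining carries). Since the total value and hence the final digits agree, a digit-by-digit bookkeeping shows that the carry totals agree as well. The bookkeeping in this last step is the only mildly delicate point; everything else is a routine consequence of the telescoping factorisation together with Kummer's theorem.
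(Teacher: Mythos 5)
The paper itself gives no proof of this theorem: it is quoted as a known generalisation of Kummer's theorem, with a pointer to Dodd and Peele \cite{DodPee91}. Your argument therefore cannot be compared with an argument of the authors, but it is correct and is essentially the standard reduction. The telescoping identity $\binom{d}{k_1,\ldots,k_s}=\prod_{i=2}^{s}\binom{S_i}{k_i}$ is right, and Kummer's theorem applied to each factor gives the $p$-adic valuation of the multinomial as the total number of carries in the sequential additions $S_{i-1}+k_i$. For the one step you flag as delicate --- matching this sequential carry count with the carry count of the simultaneous column-wise addition of $k_1,\ldots,k_s$, where a single column may now produce a carry larger than $1$ --- the cleanest route is not digit-by-digit induction but the observation (Legendre's formula in disguise) that the number of carries when adding $a$ and $b$ in base $p$ equals $(\S_p(a)+\S_p(b)-\S_p(a+b))/(p-1)$, with $\S_p$ the base-$p$ digit sum; summing this over your sequential additions telescopes to $\bigl(\sum_{i=1}^{s}\S_p(k_i)-\S_p(d)\bigr)/(p-1)$. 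The same expression is obtained for the simultaneous addition by summing, over all digit positions, the relation (column digit sum) $+$ (incoming carry) $=$ (digit of $d$) $+$ $p\cdot$(outgoing carry). This closes the only gap you acknowledge and cleanly handles the fact that the simultaneous carries need not be $0$ or $1$; with that substitution your proof is complete.
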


We will be interested in the situations where the 
multinomial coefficients are not zero modulo $p$.

\begin{corollary}\label{cor:multinomial-not-zero}
Let $p$ be a prime.
 The following are equivalent:\\
(i) The multinomial coefficient $\binom{d}{k_1, k_2, \ldots, k_s}$
is not zero modulo
$p$.\\
(ii) There are no carries when adding $k_1, k_2, \ldots, k_s$ as
numbers
written in base $p$. \\
(iii) In base $p$, each digit of $n$ equals to the sum of the digits
of $k_1, k_2, \ldots, k_s$ in the corresponding position.
\end{corollary}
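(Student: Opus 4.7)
The statement is a direct consequence of the generalised Kummer theorem stated immediately above, combined with an elementary observation about base-$p$ addition. My plan is to prove it as a cycle of implications (i) $\Leftrightarrow$ (ii) $\Leftrightarrow$ (iii).

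For (i) $\Leftrightarrow$ (ii), I would invoke the generalised Kummer theorem directly. That theorem asserts that the exponent $j$ of the largest power of $p$ dividing $\binom{d}{k_1, \ldots, k_s}$ equals the total number of carries produced when summing $k_1, \ldots, k_s$ in base $p$. The multinomial is nonzero modulo $p$ exactly when $j = 0$, which happens exactly when no carries occur. This is essentially a one-line deduction.

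For (ii) $\Leftrightarrow$ (iii), I would argue from the definition of base-$p$ addition. Write each $k_\ell$ in base $p$ as $k_\ell = \sum_t k_\ell^{(t)} p^t$ with $0 \le k_\ell^{(t)} \le p-1$, and similarly $d = \sum_t d^{(t)} p^t$. Since $\sum_\ell k_\ell = d$, one computes column sums $S_t = \sum_\ell k_\ell^{(t)}$ and then resolves carries. The condition ``no carries'' means exactly that each $S_t < p$, in which case $S_t$ coincides with the $t$-th digit $d^{(t)}$ of the sum. Conversely, if each $d^{(t)}$ equals $\sum_\ell k_\ell^{(t)}$, then each column sum is a single digit of $d$, hence less than $p$, so no carries occur. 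This is a short bookkeeping argument.

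The main (minor) obstacle is just making the base-$p$ addition argument precise without being too verbose; one needs to note that the equivalence of (ii) and (iii) does not require Kummer's theorem at all, only the positional interpretation of addition in base $p$. Since both equivalences are essentially immediate from results or definitions stated earlier, I would keep the proof to a couple of sentences, simply citing the generalised Kummer theorem for (i) $\Leftrightarrow$ (ii) and observing the digit-by-digit reformulation for (ii) $\Leftrightarrow$ (iii).
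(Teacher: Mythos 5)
Your proposal is correct and follows exactly the route the paper intends: the paper states this corollary without proof as an immediate consequence of the generalised Kummer theorem, and your two equivalences --- (i) $\Leftrightarrow$ (ii) by reading off the case $j=0$ of that theorem, and (ii) $\Leftrightarrow$ (iii) by the positional interpretation of base-$p$ addition --- are precisely the intended deduction (note you also silently correct the paper's typo, reading ``each digit of $n$'' as ``each digit of $d$'').
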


\subsection{Differentiation in $\GF(p^m)$}\label{sec:diff-pm}
When moving from $\GF(p)$ to $\GF(p^m)$ several things work
differently. For a start, differentiating once w.r.t. a variable $x$
may decrease the degree in $x$ by more than one, regardless of the
difference step. For example let us differentiate $x^d$ once. In
$(x+h)^d - x^d$ the coefficient of $x^{d-1}$ is $dh$, so when $d$ is
a multiple of $p$ the degree is strictly less than $d-1$. To examine
the general case we will use Theorem~\ref{thm:diff-formula}, so we
introduce for convenience the following notation:
\[
C_p(d,j, k) = \{ (i_1, \ldots, i_k) | 1\le i_j \le d,
i_1+\ldots+i_k=j, \mbox{ and } \binom{d}{i_1, \ldots, i_{k}, n-j}
\not \equiv 0 \bmod p \}.
\]
Note that Corollary~\ref{cor:multinomial-not-zero} gives a useful
characterisation of this set. We have:
\begin{theorem} \label{thm:diff-degree}
Let $f:\GF(p^m)\rightarrow \GF(p^m)$, $f(x)=x^d$, $d<p^m$. Let
$0<k\le m(p-1)$ and let $h_1,\ldots, h_k\in \GF(p^m)\setminus
\{0\}$. The degree of $\Delta^{(k)}_{h_1\mathbf{e_1}, \ldots,
h_k\mathbf{e_1}} x^d$  is less than or equal to the integer $d'$
computed as follows: write $d$ in base $p$ as $d=d_ud_{u-1}\ldots
d_1d_0$; let $i$ be the highest integer for which $d_0 + d_1+ \cdots
+ d_i \le k$; define $d'_{i+1} = d_{i+1} - (k- (d_0 + d_1+ \cdots +
d_i ))$; finally define $d'$ as the number written in base $p$ as
$d'=d_ud_{u-1}\ldots d_{i+2}d'_{i+1}0\ldots0$ (with $i+1$ zeroes at
the end).

In particular, for $p=2$, the binary representation of the degree
$d'$ is obtained from the binary representation of $d$ by replacing
$k$ of its ones by zeroes, starting from the least significant
digit.
\end{theorem}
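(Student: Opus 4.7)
The plan is to apply Theorem~\ref{thm:diff-formula} to expand $\Delta^{(k)}_{h_1\mathbf{e_1}, \ldots, h_k\mathbf{e_1}} x^d$ as $\sum_{j=k}^{d} c_j\, x^{d-j}$, where $c_j = \sum_{(i_1,\ldots,i_k)} \binom{d}{i_1,\ldots,i_k,d-j} h_1^{i_1}\cdots h_k^{i_k}$, the inner sum running over tuples with $i_\ell\geq 1$ and $i_1+\cdots+i_k=j$. To prove the degree is at most $d'$, it suffices to show $c_j = 0$ for every $j$ with $d-j > d'$, equivalently $j < j^* := d - d'$. Since the theorem must hold for arbitrary nonzero $h_\ell$, it is enough to prove that every multinomial $\binom{d}{i_1,\ldots,i_k,d-j}$ appearing in such a $c_j$ vanishes modulo $p$; then each summand is already zero in $\GF(p^m)$.

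Next I would translate multinomial non-vanishing into a base-$p$ digit condition using Corollary~\ref{cor:multinomial-not-zero}: writing the digits of $d$ as $d_0, d_1,\ldots, d_u$, the multinomial $\binom{d}{i_1,\ldots,i_k,d-j}$ is nonzero modulo $p$ precisely when adding $i_1,\ldots,i_k$ and $d-j$ in base $p$ produces no carries, i.e.\ $\sum_{\ell=1}^k (i_\ell)_t + (d-j)_t = d_t$ at every digit position $t$. Setting $j_t := \sum_\ell (i_\ell)_t$, this forces $0\le j_t\le d_t < p$ and $(d-j)_t = d_t - j_t$, so these $j_t$ are automatically the base-$p$ digits of $j$. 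The constraint $i_\ell\geq 1$ means each $i_\ell$ has digit sum at least $1$, and summing over $\ell$ gives $\sum_t j_t = \sum_\ell (\text{digit sum of } i_\ell) \geq k$.

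The problem thus reduces to a purely combinatorial minimization: over all integers with base-$p$ digits $(j_t)$ satisfying $0\le j_t\le d_t$ and $\sum_t j_t\geq k$, find the smallest value $j = \sum_t j_t p^t$. Because the weights $p^t$ strictly increase in $t$, a standard greedy exchange argument (moving one unit of digit-mass from a higher position to a lower one strictly decreases $j$, and such a move is possible as long as some lower $j_s < d_s$) shows that the optimum is realized by filling the lowest positions first: take $j_t = d_t$ for $t = 0, 1, \ldots, i$ where $i$ is the largest index with $d_0+\cdots+d_i\leq k$, $j_{i+1} = k-(d_0+\cdots+d_i)$, and $j_t = 0$ for $t\geq i+2$. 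The resulting minimum is $j^* = \sum_{t=0}^{i} d_t p^t + (k - \sum_{t=0}^i d_t)\,p^{i+1}$, which matches $d-d'$ read off the digit description of $d'$ in the statement. Hence for every $j<j^*$ no tuple can satisfy both the no-carry condition and $i_\ell\geq 1$, every relevant multinomial is $\equiv 0\pmod{p}$, and $c_j=0$ as required.

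The main obstacle I anticipate is the bookkeeping in the second paragraph, namely verifying that the positionwise digit-sums $j_t$ really are the base-$p$ digits of $j$ (which follows because the no-carry condition forces $j_t < p$), and formalizing the exchange argument in the third paragraph. The binary specialization drops out at once: when $p=2$ each $d_t\in\{0,1\}$, so the greedy choice is simply to zero out the $k$ lowest positions in which $d$ has a $1$, giving the description of $d'$ claimed at the end of the statement.
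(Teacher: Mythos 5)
Your proposal is correct and follows essentially the same route as the paper's proof: expand via Theorem~\ref{thm:diff-formula}, use Corollary~\ref{cor:multinomial-not-zero} to reduce non-vanishing of the multinomials to a no-carry digit condition, and minimize $j$ by filling the lowest base-$p$ digit positions first. Your write-up is somewhat more careful than the paper's (the paper simply asserts the greedy choice of $i_1,\ldots,i_k$ as powers of $p$ is optimal, where you justify it by an exchange argument), but the underlying idea is identical.
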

\begin{proof}
By Theorem~\ref{thm:diff-formula} the degree $d'$ will be less than
or equal to $d-j$ where $j$ is minimal such that $C_p(d,j,k)\neq
\emptyset$. Using Corollary~\ref{cor:multinomial-not-zero}(iii) we
see that the minimum value for $j$ for given $d$ and $k$ is achieved
by choosing $i_1,\ldots, i_k\ge 1$ as small as possible while
maintaining $\binom{d}{i_1, \ldots i_k, d-j}$ not equal to zero
modulo $p$. This is achieved by choosing $i_1, \ldots i_k$ as
follows: $d_0$ of them will be equal to 1, $d_1$ will be equal to
$p$ (i.e. $10$ in base $p$), $d_2$ will be equal to $p^2$ (i.e.
$100$ in base $p$), $\ldots$, $d_i$ of them will be equal to $p^i$
and finally $k- (d_0 + d_1+ \cdots + d_i)$ will be equal to
$p^{i+1}$. It can be verified that $d'= d- (i_1+\ldots +i_k)$ will
then have the form described in
the theorem statement. 
 \end{proof}

Note that the sum of the digits of a number in base $p$ plays an
important role here. For any non-negative integer $a$ we will
introduce the notation $\S_p(a)$ as being the sum of the digits in
the base $p$ representation of $a$. We define the digit-sum degree
of a univariate polynomial $f$ in a variable $x_i$ as being $\max\{
\S_p(j) | c_j \ne 0 \}$ where $f=\sum_{j=0}^d c_jx_i^j$ with $c_j$
polynomials in the remaining variables. The previous theorem
implies:
\begin{corollary}\label{cor:number-of-diff-before-zero}
Let $f$ be a polynomial function and let $s$ be the digit-sum degree
of $f$ in $x_i$. Then differentiating $f$ a total of $s$ times
w.r.t. $x_i$ will always produce a polynomial function which does
not depend on $x_i$ (possibly the identically zero function).
\end{corollary}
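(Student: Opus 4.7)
The plan is to use linearity of $\Delta$ to reduce to pure powers of $x_i$, then apply Theorem~\ref{thm:diff-degree} to each such power. Write $f(\mathbf{x}) = \sum_{j=0}^{p^m-1} c_j(\mathbf{x}')\, x_i^j$, where $\mathbf{x}'$ denotes the variables other than $x_i$ and each $c_j$ is a polynomial not depending on $x_i$. By the definition of the digit-sum degree, $c_j = 0$ whenever $\S_p(j) > s$, so only indices $j$ with $\S_p(j)\le s$ contribute.

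Since each $c_j(\mathbf{x}')$ is unaffected by shifts of the $x_i$ coordinate, the operator $\Delta_{h\mathbf{e}_i}$ with $h\ne 0$ commutes with multiplication by $c_j$, giving $\Delta_{h\mathbf{e}_i}(c_j(\mathbf{x}')\, x_i^j) = c_j(\mathbf{x}')\, \Delta_{h\mathbf{e}_i}(x_i^j)$, and the same factorisation persists after $s$ iterations with any non-zero steps $h_1,\ldots,h_s$. It therefore suffices to show that $\Delta^{(s)}_{h_1\mathbf{e}_i, \ldots, h_s\mathbf{e}_i}(x_i^j)$ is independent of $x_i$ whenever $s\ge \S_p(j)$.

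First I would handle the critical case $s=\S_p(j)$ by applying Theorem~\ref{thm:diff-degree} with $d=j$ and $k=\S_p(j)$. Writing $j = d_u d_{u-1}\cdots d_0$ in base $p$, the digit sum is exactly $k$, so in the recipe of that theorem the highest index $\ell$ with $d_0+\cdots+d_\ell\le k$ is $\ell=u$ and $d'_{u+1} = d_{u+1}-0 = 0$; every remaining base-$p$ digit of $d'$ is zero by construction, hence $d'=0$. Thus the degree in $x_i$ of $\Delta^{(\S_p(j))}_{h_1\mathbf{e}_i, \ldots, h_{\S_p(j)}\mathbf{e}_i}(x_i^j)$ is at most $0$, i.e.\ the result does not depend on $x_i$. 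If $s>\S_p(j)$, Proposition~\ref{prop:deg-decreases-by-at least-one} shows that each additional differentiation w.r.t.\ $x_i$ of a polynomial of degree $0$ in $x_i$ returns the identically zero function, so the conclusion is preserved.

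The main (and essentially only) obstacle is the combinatorial check that the recipe of Theorem~\ref{thm:diff-degree} collapses to $d'=0$ at the borderline $k=\S_p(j)$; once that boundary case is verified, the statement follows by summing over the monomials of $f$ using linearity of $\Delta$ together with the commutation of the $x_i$-independent coefficients $c_j(\mathbf{x}')$ with $\Delta_{h\mathbf{e}_i}$.
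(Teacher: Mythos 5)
Your proposal is correct and follows essentially the same route as the paper, which derives this corollary directly from Theorem~\ref{thm:diff-degree} (the paper simply states ``The previous theorem implies'' without writing out the details). Your explicit verification that the recipe of that theorem yields $d'=0$ when $k=\S_p(j)$, together with linearity and the reduction to monomials in $x_i$, is exactly the intended argument.
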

\begin{corollary}
Any function $f:\GF(p^m)\rightarrow \GF(p^m)$ can be differentiated
at most $m(p-1)$ times w.r.t. a given variable before the result
becomes the identically zero function.
\end{corollary}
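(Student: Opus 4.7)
The plan is to combine two ingredients: the canonical polynomial representation of functions on $\GF(p^m)$ as polynomials of degree at most $p^m - 1$ in the variable, and Corollary~\ref{cor:number-of-diff-before-zero}. First I would observe that the base-$p$ expansion of $p^m - 1$ consists of $m$ copies of the digit $p-1$, so $\S_p(p^m - 1) = m(p-1)$. Since the unique polynomial representative of $f$ has degree at most $p^m - 1$, its digit-sum degree $s$ satisfies $s \le m(p-1)$.

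Next I would invoke Corollary~\ref{cor:number-of-diff-before-zero}: the function $\Delta^{(s)} f$ does not depend on the variable. In the univariate setting this means $\Delta^{(s)} f$ is a constant in $\GF(p^m)$, which by Proposition~\ref{prop:deg-decreases-by-at least-one} is annihilated by one further application of $\Delta$ (a constant has degree $0$ in $x$). Hence $\Delta^{(k)} f \equiv 0$ for every $k \ge s + 1$, and in particular for every $k > m(p-1)$; no application of $\Delta$ beyond the $m(p-1)$-th can produce a non-zero result.

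There is essentially no obstacle here: the corollary follows in a couple of lines from Corollary~\ref{cor:number-of-diff-before-zero} together with the numerical identity $\S_p(p^m - 1) = m(p-1)$. The only subtle point worth flagging is the distinction between "$\Delta^{(s)} f$ is a (possibly non-zero) constant of $\GF(p^m)$", which is what the digit-sum bound directly delivers, and "some $\Delta^{(k)} f$ is identically zero", which is what the statement requires; one trivial application of $\Delta$ bridges the gap, and the resulting inequality $s + 1 \le m(p-1) + 1$ is consistent with the bound of $m(p-1)$ non-trivial differentiations claimed.
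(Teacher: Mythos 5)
Your proof is correct and is precisely the argument the paper leaves implicit: the unique polynomial representative has degree at most $p^m-1$, whose digit sum in base $p$ is $m(p-1)$, so the digit-sum degree is at most $m(p-1)$ and Corollary~\ref{cor:number-of-diff-before-zero} applies. Your care in distinguishing "constant after $s$ differentiations" from "identically zero after one more" is exactly the right reading of the statement.
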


Is the bound in Corollary~\ref{cor:number-of-diff-before-zero}
tight, in the sense that there are functions which are non-zero
after a number of differentiations equal to their digit-sum degree?
In particular, are there functions which are still non-zero after
$m(p-1)$ differentiations? We will show that this indeed the case,
but only if we choose the $h_i$ carefully. First let us illustrate a
choice of the steps $h_i$ which we need to avoid. By
Proposition~\ref{prop:incl-excl-one-var-p}, if we differentiate $p$
times with all steps equal to 1 the result is identically zero
regardless of the original function $f$:
\begin{eqnarray*}
\Delta^{(p)}_{\mathbf{e_{1}}, \ldots, \mathbf{e_{1}}} f (x_1,
\ldots, x_n) &= &  \sum_{i=0}^{p} (-1)^{p-i}\binom{p}{i}f(x_1+i,
x_2, \ldots, x_n)\\& = & f(x_1+p, x_2,
\ldots, x_n) - f(x_1, x_2, \ldots, x_n)\\
&=& 0.
\end{eqnarray*}
because all the coefficients $\displaystyle \binom{p}{i}$ for
$0<i<p$ are divisible by $p$.

Denote by $b_0, \ldots, b_{m-1}$ a basis of $\GF(p^m)$ viewed as a
$m$-dimensional vector space over $\GF(p)$. We choose the sequence
$h_{1}, \ldots, h_{(p-1)m}$ as follows: $p-1$ values of $b_0$,
followed by $p-1$ values of $b_1$ etc.

 As in Section~\ref{sec:proposed-cube}, we pick a set
of variables and their multiplicities, defining the term $t =
x_{i_1}^{m_1} \cdots x_{i_k}^{m_k}$. For a polynomial function $f$
in $n$ variables, we now define:
\[
f_t(x_1, \ldots, x_n) = \Delta^{(m_1)}_{h_1\mathbf{e_{i_1}}, \ldots,
h_{m_1}\mathbf{e_{i_1}}} \ldots \Delta^{(m_k)}_{h_1\mathbf{e_{i_k}},
\ldots, h_{m_k}\mathbf{e_{i_k}}} f(x_1, \ldots, x_n)
\]
where the sequence $h_{1}, \ldots, h_{m(p-1)}$ has been fixed as
above. We will concentrate on differentiating several times w.r.t.
one variable, $x_1$.

For these choices of $h_i$
Proposition~\ref{prop:diff-incl-excl-formula} becomes:
\begin{proposition}\label{prop:incl-excl-pm}
Let $f:\GF(p^m)^n \rightarrow \GF(p^m)$ be a polynomial function and
let $t=x_1^{m_1}$. Write $m_1 = q (p-1) + r$ with $0< r \le p-1$
(note the slight difference from the usual quotient and reminder, as
here the reminder can be $p-1$ but cannot be 0). Then
\[
f_t(\mathbf{x}) = \sum_{a_0, \dots, a_{q}} (-1)^{{m_1} -
\sum_{i=0}^q a_i} \binom{p-1}{a_0} \cdots
\binom{p-1}{a_{q-1}}\binom{r}{a_q} f(x_1 + a_0 b_0 + \ldots + a_q
b_q, x_2, \ldots,x_n)
\]
where the sum is over all tuples $(a_0, \dots, a_{q})\in \{ 0,
\ldots, p-1 \}^{q} \times \{ 0, \ldots, r \}$.

All the coefficients in the sum above are non-zero. If $f$ is a
``black box'' function, one evaluation of $f_t$ needs $p^q(r+1)$
evaluations of $f$.
\end{proposition}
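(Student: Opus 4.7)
The plan is to start from the inclusion--exclusion expansion in Proposition~\ref{prop:diff-incl-excl-formula} applied to $f_t = \Delta^{(m_1)}_{h_1\mathbf{e_1},\ldots,h_{m_1}\mathbf{e_1}} f$, and to collapse it by grouping together the repeated difference steps. Writing $m_1 = q(p-1)+r$ with $0<r\le p-1$, the sequence of steps consists of $p-1$ copies of $b_0$, then $p-1$ copies of $b_1,\ldots, b_{q-1}$, and finally $r$ copies of $b_q$. Thus any subset $\{i_1,\ldots,i_j\}\subseteq\{1,\ldots,m_1\}$ appearing in the inclusion--exclusion sum is determined up to $\mathbf{a}_{i_1}+\cdots+\mathbf{a}_{i_j}$ by the tuple $(a_0,\ldots,a_q)$ counting how many of its indices fall in each block, giving $\mathbf{a}_{i_1}+\cdots+\mathbf{a}_{i_j}=(a_0b_0+\cdots+a_qb_q)\mathbf{e}_1$ and $j=a_0+\cdots+a_q$.

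First I would reindex the sum by these tuples: the number of subsets with a prescribed $(a_0,\ldots,a_q)$ is $\binom{p-1}{a_0}\cdots\binom{p-1}{a_{q-1}}\binom{r}{a_q}$, while the sign $(-1)^{k-j}$ becomes $(-1)^{m_1-\sum_i a_i}$, and all such subsets contribute the same value $f(x_1+a_0b_0+\cdots+a_qb_q,x_2,\ldots,x_n)$. Collecting terms yields the claimed formula, where $(a_0,\ldots,a_{q-1})\in\{0,\ldots,p-1\}^{q}$ and $a_q\in\{0,\ldots,r\}$.

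Next I would verify that each coefficient is non-zero in $\GF(p^m)$. For $0\le a\le p-1$, the standard identity $\binom{p-1}{a}=\frac{(p-1)!}{a!(p-1-a)!}$ involves only integers strictly less than $p$, none of whose factorials are divisible by $p$; hence $\binom{p-1}{a}\not\equiv 0\pmod p$. The same argument applied to $\binom{r}{a_q}$ with $a_q\le r\le p-1$ handles the last factor.

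Finally, for the evaluation count, the number of admissible tuples $(a_0,\ldots,a_q)$ is $p^{q}(r+1)$, which is one call to $f$ per term. The main thing to be careful about is the reindexing step: keeping track of the multiplicities and of the binomial weights of the last, possibly shorter, block (where one uses $\binom{r}{a_q}$ rather than $\binom{p-1}{a_q}$); the rest of the argument is routine.
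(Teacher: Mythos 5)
Your proof is correct and follows essentially the same route as the paper: the paper proves this by the same reduction to Proposition~\ref{prop:diff-incl-excl-formula} (via the analogous argument for Proposition~\ref{prop:black-box-eval-p}), grouping the subsets of difference steps by how many are taken from each block of equal steps and observing that the resulting binomial coefficients are nonzero modulo $p$. No gaps to report.
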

\begin{proof}
Similar to the proof of Proposition~\ref{prop:black-box-eval-p}.
 \end{proof}

We show next that our choice of $h_i$ is indeed a valid choice in
the sense that there are functions which can be differentiated
$m(p-1)$ times w.r.t. the same variable without becoming zero.

\begin{proposition}\label{prop:h_i-choice-good}
For each $t=x_1^{m_1}$ with
$0\le m_1 \le m(p-1)$ there is at least a function $f:\GF(p^m)^n
\rightarrow \GF(p^m)$ such that $f_t(\mathbf{x})$ is not the
identical zero function. Moreover there is at least a polynomial
function $f$ with digit-sum degree in $x_1$ equal to $m_1$ such that
$f_t(\mathbf{x})$ is a non-zero constant function.
\end{proposition}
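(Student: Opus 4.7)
My plan is to use the additive isomorphism $\GF(p^m)\cong\GF(p)^m$ induced by the basis $b_0,\ldots,b_{m-1}$ in order to reduce the differentiation w.r.t.\ $x_1$ to independent single-variable differentiations over $\GF(p)$. Let $L_j:\GF(p^m)\to\GF(p)$ denote the $\GF(p)$-linear coordinate form with $L_j(b_i)=\delta_{ij}$; as a $\GF(p)$-linear map on $\GF(p^m)$ it is represented by a $p$-polynomial $L_j(x_1)=\sum_{i=0}^{m-1}\alpha_{j,i}x_1^{p^i}$. Writing $m_1=q(p-1)+r$ with $0<r\le p-1$ (the case $m_1=0$ is trivial, since then $f_t=f$ and any non-zero constant works), I propose the polynomial
\[
f(x_1) \;=\; \Bigl(\prod_{j=0}^{q-1}L_j(x_1)^{p-1}\Bigr)\,L_q(x_1)^r.
\]

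The key observation is that under $x_1=\sum_j c_jb_j$, adding $b_j$ to $x_1$ shifts only $c_j$ by one, so $\Delta_{b_j}$ on functions of $x_1$ corresponds to the step-$1$ finite difference $\Delta_1$ on the coordinate $c_j$. Under this correspondence $f$ becomes $g(c_0,\ldots,c_{m-1})=c_0^{p-1}c_1^{p-1}\cdots c_{q-1}^{p-1}c_q^r$, a product of polynomials in disjoint variables, and $f_t$ becomes the composition of $\Delta_1^{p-1}$ in each of $c_0,\ldots,c_{q-1}$ and $\Delta_1^r$ in $c_q$, applied to $g$. Since finite differences commute with multiplication by factors that are constant in the differentiated variable, these operators act factor by factor, and by equation~(\ref{eq:full-diff-one-var}) together with Wilson's theorem $(p-1)!\equiv-1\pmod p$ we get
\[
f_t \;=\; \bigl((p-1)!\bigr)^q\,r! \;=\; (-1)^q\,r!,
\]
a non-zero constant in $\GF(p)$. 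This settles the non-vanishing claim.

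For the digit-sum degree, expanding $L_j(x_1)^k$ (with $k\le p-1$) by the multinomial theorem gives only monomials $x_1^{\sum_i l_i p^i}$ with $l_i\in\{0,\ldots,p-1\}$ and $\sum_i l_i=k$, so by Corollary~\ref{cor:multinomial-not-zero} every such exponent has base-$p$ digit sum exactly $k$ (no carries). Since $\S_p$ is subadditive, every monomial in the expansion of $f$ has exponent with digit sum at most $q(p-1)+r=m_1$, so $f$ has digit-sum degree in $x_1$ at most $m_1$. Conversely, if this digit-sum degree were strictly less than $m_1$, Corollary~\ref{cor:number-of-diff-before-zero} would force $f_t\equiv 0$, contradicting $f_t=(-1)^q r!\neq 0$; hence the digit-sum degree equals $m_1$.

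The main delicate point is pinning down the dictionary that translates $\Delta_{b_j}$ acting on $x_1$ into $\Delta_1$ acting on $c_j$, and verifying that the resulting differential operator indeed factors over products of polynomials in disjoint variables. Once those translations are clean, the rest reduces to applying Theorem~\ref{thm:one-var-poly} together with Wilson's theorem.
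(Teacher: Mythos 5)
Your proof is correct, but it takes a genuinely different route from the paper's. The paper's proof is an interpolation argument: it defines $f(x_1)=\prod\bigl(x_1-(a_0b_0+\cdots+a_qb_q)\bigr)$ over all non-zero tuples $(a_0,\dots,a_q)\in\{0,\dots,p-1\}^q\times\{0,\dots,r\}$, so that $f$ vanishes at every evaluation point of the sum in Proposition~\ref{prop:incl-excl-pm} except the origin; Theorem~\ref{thm:diff-degree} then shows $f_t$ is constant, and the single surviving term gives $f_t(0)=\pm f(0)\ne 0$. A side benefit is that $\deg_{x_1}(f)=(r+1)p^q-1$ is known exactly, so the digit-sum degree $q(p-1)+r=m_1$ is read off directly. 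You instead build $f$ as a product of powers of the dual-basis coordinate forms $L_j$ and push the whole computation through the $\GF(p^m)\cong\GF(p)^m$ dictionary (essentially Theorem~\ref{thm:cube-attack-reduction} with $n=1$), reducing to the disjoint-variable monomial $c_0^{p-1}\cdots c_{q-1}^{p-1}c_q^r$ over $\GF(p)$ and to equation~(\ref{eq:full-diff-one-var}); this buys an explicit closed-form value $f_t=(-1)^q\,r!$ rather than just non-vanishing, and it illustrates nicely the paper's own thesis that differentiation over $\GF(p^m)$ reduces to $\GF(p)$. One loose end you should tighten: the digit-sum degree is defined on the canonical representative of $f$ of degree at most $p^m-1$, whereas your subadditivity bound applies to the unreduced expansion of $\prod_j L_j^{k_j}$, whose degree can reach $m_1p^{m-1}>p^m-1$; you therefore also need the (true, but not stated in the paper) fact that replacing an exponent $e$ by its reduction $e'\equiv e \pmod{p^m-1}$ with $1\le e'\le p^m-1$ cannot increase $\S_p$. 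Since your lower bound via Corollary~\ref{cor:number-of-diff-before-zero} is airtight, only this upper-bound step needs the extra lemma; the paper's construction avoids the issue because its degree already lies below $p^m$.
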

\begin{proof}
We will construct a polynomial function $f$ in one variable, $x_1$.
Write $m_1 = q (p-1) + r$ with $0< r \le p-1$ as in
Proposition~\ref{prop:incl-excl-pm}. In the formula in
Proposition~\ref{prop:incl-excl-pm} all the terms in the sum have a
non-zero coefficient, and have distinct arguments for $f$. We will
prescribe the values of $f$ at these evaluation points and then
interpolate $f$. Namely we will prescribe $f(a_0 b_0 + \ldots + a_q
b_q)=0$ for all $(a_0, \dots, a_{q})\in \{ 0, \ldots, p-1 \}^{q}
\times \{ 0, \ldots, r \}$ except $a_0 = \ldots = a_q = 0$, where
$f(0)\ne 0$. The polynomial $f$ is interpolated as
\[
f(x_1) = \prod (x_1 - (a_0 b_0 + \ldots + a_q b_q))
\]
where the product is over $(a_0, \dots, a_{q})\in \{ 0, \ldots, p-1
\}^{q} \times \{ 0, \ldots, r \}$ except $a_0 = \ldots = a_q = 0$.
It can be easily seen that $\deg_{x_1}(f) = (r+1)p^q-1$, so the
representation of its degree in base $p$ consists of a digit of $r$
followed by $q$ digits of $p-1$. Hence $\S_p(\deg_{x_1}(f)) =
q(p-1)+r = m_1$.

On the other hand, we know from Theorem~\ref{thm:diff-degree} that
$f_t$ is a constant (possibly zero). However $f_t(0) = f(0) \ne 0$
by Proposition~\ref{prop:incl-excl-pm} and our choice of
interpolation points. Hence $f_t$ is a non-zero constant. 
\end{proof}

Note that there are other possible valid choices of $h_i$, but we
aimed to keep things simple computationally by using this particular
choice.

\begin{example}
Consider $f(x_1) = x_1^5 \in \GF(9)[x_1]$. We define $h_1=h_2=1$ and
$h_3=h_4=\alpha$, where $\alpha$ is a primitive element of $\GF(9)$.
We compute the third order differential
$\Delta^{(3)}_{\mathbf{e_{1}}, \mathbf{e_{1}}, \alpha\mathbf{e_{1}}}
f(x_1)$ using either Proposition~\ref{prop:diff-incl-excl-formula}
or Theorem~\ref{thm:diff-formula} and obtain:
\[
\Delta^{(3)}_{\mathbf{e_{1}}, \mathbf{e_{1}}, \alpha\mathbf{e_{1}}}
f(x_1) = 2 \alpha^3 + \alpha  = 2\alpha(\alpha+1)(\alpha+2)\neq 0.
\]
\end{example}

Generalising the results of this section to differentiation w.r.t.
several variables is not difficult but the notation becomes
cumbersome. Moreover, we will see in the next subsection that such a
generalisation is not very useful for a practical attack, so we
leave it as an exercise to the reader.

\subsection{Reducing differentiation over $\GF(p^m)$ to
differentiation over $\GF(p)$}

Fix a basis $b_0, \ldots, b_{m-1} \in \GF(p^m)$ for $\GF(p^m)$
viewed as an $m$-dimensional vector space over $\GF(p)$. Any element
$a\in \GF(p^m)$ can be uniquely written as $a=
a_0b_0+\ldots+a_{m-1}b_{m-1}$ with $a_i\in \GF(p)$. Denote by
$\varphi:\GF(p^m)\rightarrow \GF(p)^m$ the vector space isomorphism
defined by $\varphi(a) = (a_0, \ldots, a_{m-1})$; this can be
naturally extended to $\varphi:\GF(p^m)^n\rightarrow \GF(p)^{mn}$ ;
denote by $\pi_j: \GF(p^m)\rightarrow \GF(p)$ the $m$ the projection
homomorphisms defined as $\pi_j(a) = a_j$.

 Let $f:\GF(p^m)^n \rightarrow \GF(p^m)$ be a polynomial
function in $n$ variables $x_1, \ldots, x_n$. By writing $x_i =
x_{i0}b_0 + \ldots + x_{i, m-1} b_{m-1}$ the function $f$ can be
alternatively viewed as a function $\overline{f}:\GF(p)^{nm}
\rightarrow f:\GF(p)^{m}$ defined by $\overline{f} = \varphi^{-1}
\circ f \circ \varphi$:
\[
\begin{array}{ccc}
\GF(p^m)^n & \stackrel{f}{\longrightarrow} &\GF(p^m) \\
\varphi\downarrow & & \downarrow\varphi\\
\GF(p)^{mn} & \stackrel{\overline{f}}{\longrightarrow} &\GF(p)^m \\
\end{array}
\]

Alternatively we can view $f$ as $m$ polynomial (projection)
functions $\overline{f}_0, \ldots, \overline{f}_{m-1}: \GF(p)^{nm}
\rightarrow \GF(p)$, defined by $\overline{f}_i= \varphi^{-1} \circ
f\circ\pi_i $ each in $nm$ variables $x_{ij}$ with $i=1, \ldots, n$
and $j=0, \ldots, m-1$.

\begin{proposition}
With the notations above, the total degree of $\overline{f}_j$ in
the variables $x_{i0}, \ldots, x_{i,m-1}$ is at most the digit-sum
degree of $f$ in $x_i$.
\end{proposition}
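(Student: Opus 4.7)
The plan is to reduce to the case of a single monomial $c\, x_i^d$ in $f$ (with $c$ polynomial in the other variables) and to show that, after substituting $x_i = x_{i0}b_0+\cdots+x_{i,m-1}b_{m-1}$, the expression $x_i^d$ already has total degree at most $\S_p(d)$ in the variables $x_{i0},\ldots,x_{i,m-1}$ as an element of $\GF(p^m)[x_{i0},\ldots,x_{i,m-1}]$. Applying $\varphi^{-1}$ and then projecting to the $j$-th coordinate can only preserve or decrease the total degree in these variables, and summing over the monomials of $f$ using linearity then gives the bound involving the digit-sum degree in $x_i$.

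The key observation is the ``freshman's dream'' in characteristic $p$: for every $r\ge 0$,
\[
x_i^{p^r} \;=\; \Bigl(\sum_{s=0}^{m-1} x_{is}\,b_s\Bigr)^{p^r} \;=\; \sum_{s=0}^{m-1} x_{is}^{p^r}\,b_s^{p^r} \;=\; \sum_{s=0}^{m-1} x_{is}\,b_s^{p^r},
\]
where in the last step I use that the $x_{is}$ take values in $\GF(p)$, so $x_{is}^{p^r}=x_{is}$ as polynomial functions. Thus $x_i^{p^r}$ is $\GF(p^m)$-linear in $x_{i0},\ldots,x_{i,m-1}$, i.e.\ it has total degree at most $1$ in these variables.

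Next, I would write $d$ in base $p$ as $d=d_0+d_1p+\cdots+d_up^u$ and factor
\[
x_i^{d} \;=\; \prod_{r=0}^{u} \bigl(x_i^{p^r}\bigr)^{d_r}.
\]
Each factor $(x_i^{p^r})^{d_r}$ has total degree $d_r$ in $x_{i0},\ldots,x_{i,m-1}$ by the previous step, so the product has total degree at most $d_0+d_1+\cdots+d_u=\S_p(d)$. The coefficients in $c$ do not involve $x_i$ (hence do not involve any $x_{is}$), so multiplication by $c$ does not increase the total degree in the $x_{is}$. Finally, by linearity, $\overline{f}$ has total degree in $x_{i0},\ldots,x_{i,m-1}$ bounded by the maximum of $\S_p(d)$ over the $x_i$-degrees $d$ appearing in $f$, i.e.\ by the digit-sum degree of $f$ in $x_i$; the same bound transfers to each component $\overline{f}_j$.

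I do not anticipate a real obstacle here; the proof is essentially the Frobenius identity combined with the base-$p$ expansion of $d$. The only slightly delicate point is to be clear that we are talking about polynomial functions (so that $x_{is}^p=x_{is}$ is legitimate) rather than formal polynomials, and that the projection $\pi_j$ cannot increase total degree because it is $\GF(p)$-linear in the coefficients.
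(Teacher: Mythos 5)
Your proof is correct and follows essentially the same route as the paper's: both rest on the Frobenius identity in characteristic $p$ combined with the base-$p$ expansion of the exponent $d$. The only difference is presentational --- the paper runs an induction on the number of base-$p$ digits via $d = pd' + d_0$, using that taking $p$-th powers of polynomial functions over $\GF(p)$ does not change the degree in each variable, whereas you unroll this into the closed-form factorization $x_i^{d} = \prod_{r}\bigl(x_i^{p^r}\bigr)^{d_r}$ with each factor linear in the components.
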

\begin{proof}
It suffices to prove the statement for $f=x_1^{d}$. We do it by
induction on the number of digits in the representation of $d_1$ in
base $p$. For one digit, i.e. $d=1, 2, \ldots, p-1$ it is trivially
satisfied. For the induction step, write $d=pd'+d_0$, with $0\le
d_0<p$. Let $A=x_{10}b_0 + x_{11} b_1+ \ldots + x_{1,m-1} b_{m-1}$.
We have:
\[
A^{pd'+d_0} = (A^{d'})^p A^{d_0}
\]
In $\GF(p)$ we have $x^p=x$, so computing the power $p$ of any
polynomial function over $\GF(p)$ does not change the degree in each
variable. Hence the degree in $x_{10}, \ldots, x_{1,m-1}$ of $A^d$
equals at most the degree of $A^{d_0}$ plus the degree of $A^{d'}$,
so by the induction hypothesis, it is at most equal to $d_0$ plus
the sum of the digits of $d'$; this equals the sum of the digits of
$d$. 
 \end{proof}
 Note that in the proposition above $d_i\le
p^m-1$, so the sum of the digits of $d_i$ is at most $(p-1)m$. On
the other hand, $\overline{f}_j$ are polynomial functions over
$\GF(p)$, so they have degree at most $p-1$ in each variable.
Therefore their total degree in the variables $x_{i0}, \ldots,
x_{i,m-1}$ is at most $(p-1)m$, so these facts tie in.

In the next theorem, since we are differentiating functions in a
different number of variables, we will use for elementary vectors
the notation $\mathbf{e^{(n)}_{i}}$  instead of $\mathbf{e_{i}}$ to
clarify the length $n$ of the elementary vector. Note that here the
difference steps are elements in a basis of $\GF(p^m)$, but they do
not need to be in the order prescribed in the discussion at the
beginning of Section~\ref{sec:diff-pm}.
\begin{theorem}\label{thm:cube-attack-reduction}
Let $f:\GF(p^m)^n \rightarrow \GF(p^m)$ be a polynomial function and
let $r_0, \ldots, r_{m-1}$ with $0\le r_i \le p-1$.
\[
\overline{\Delta^{(r_0)}_{b_0\mathbf{e^{(n)}_{1}}, \ldots,
b_0\mathbf{e^{(n)}_{1}}}
 \ldots
 \Delta^{(r_{m-1})}_{b_{m-1}\mathbf{e^{(n)}_{1}}, \ldots,
b_{m-1}\mathbf{e^{(n)}_{1}}}
 f } =
\Delta^{(r_0)}_{\mathbf{e^{(mn)}_{1}}, \ldots,
\mathbf{e^{(mn)}_{1}}}
 \ldots
\Delta^{(r_{m-1})}_{\mathbf{e^{(mn)}_{m-1}}, \ldots,
\mathbf{e^{(mn)}_{m-1}}} \overline{f} \] the latter being a
differentiation $r_0$ times w.r.t $x_{10}$, $r_1$ times w.r.t
$x_{11}$, $\ldots$, $r_{m-1}$ times w.r.t. $x_{1,m-1}$.
\end{theorem}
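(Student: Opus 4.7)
The plan is to reduce the statement to a single-step commutation lemma between differentiation and the isomorphism $\varphi$, and then iterate. Since the finite difference operator only uses the additive group structure, and $\varphi:\GF(p^m)^n\to\GF(p)^{mn}$ is an additive group isomorphism (extended componentwise from the basis change), differentiation should commute with $\varphi$ in the appropriate sense.

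First I would prove the following one-step lemma: for any function $g:\GF(p^m)^n\to\GF(p^m)$ and any vector $\mathbf{a}\in\GF(p^m)^n$,
\[
\overline{\Delta_{\mathbf{a}} g} \;=\; \Delta_{\varphi(\mathbf{a})}\, \overline{g}.
\]
The proof is a direct unfolding: for any $\mathbf{y}\in \GF(p)^{mn}$,
\[
\overline{\Delta_{\mathbf{a}} g}(\mathbf{y}) \;=\; \varphi\!\bigl(g(\varphi^{-1}(\mathbf{y})+\mathbf{a})\bigr) - \varphi\!\bigl(g(\varphi^{-1}(\mathbf{y}))\bigr),
\]
using that $\varphi$ is an additive homomorphism. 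Since $\varphi^{-1}(\mathbf{y})+\mathbf{a}=\varphi^{-1}(\mathbf{y}+\varphi(\mathbf{a}))$, this rewrites as $\overline{g}(\mathbf{y}+\varphi(\mathbf{a}))-\overline{g}(\mathbf{y})=\Delta_{\varphi(\mathbf{a})}\overline{g}(\mathbf{y})$.

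Next I would identify the images of the specific step vectors appearing in the theorem. Under the chosen basis, $x_1=x_{1,0}b_0+\cdots+x_{1,m-1}b_{m-1}$, so adding $b_j$ to the first coordinate of an element of $\GF(p^m)^n$ corresponds under $\varphi$ to incrementing the $\GF(p)$-coordinate labelled $x_{1,j}$ by one. In other words $\varphi(b_j\mathbf{e}^{(n)}_{1})$ is exactly the elementary vector in $\GF(p)^{mn}$ in the position of $x_{1,j}$. (This is where one has to fix a convention on how the $mn$ flattened coordinates are indexed; once that is fixed, the identification is immediate.)

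Finally, I would apply the one-step lemma $r_0+r_1+\cdots+r_{m-1}$ times, peeling the operators from the outside in. Because $\Delta$ is linear and commutes with itself, and because $\varphi$ is a bijection, each peeling step replaces one $\Delta_{b_j\mathbf{e}^{(n)}_{1}}$ on the LHS with a $\Delta_{\mathbf{e}^{(mn)}_{\ell(j)}}$ on the RHS, with no interference between successive applications. This yields the theorem. The main ``obstacle'' is really just bookkeeping of indices between the two labellings of coordinates; the mathematical content is the single commutation lemma, which is essentially the observation that differentiation is a purely additive-group construct and the additive groups of $\GF(p^m)^n$ and $\GF(p)^{mn}$ are isomorphic via $\varphi$.
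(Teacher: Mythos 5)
Your proposal is correct and follows essentially the same route as the paper: the paper's proof is an induction on $\sum r_i$ whose base case is exactly your one-step computation (unfolding $\Delta_{b_0\mathbf{e}^{(n)}_1}f$ and using that adding $b_0$ to $x_1$ increments the coordinate $x_{10}$), with the inductive step justified by $\varphi$ being an additive isomorphism. Stating the single-step commutation $\overline{\Delta_{\mathbf{a}}g}=\Delta_{\varphi(\mathbf{a})}\overline{g}$ for arbitrary $\mathbf{a}$ and then specialising is a slightly cleaner packaging of the same idea, not a different argument.
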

\begin{proof} Induction on $\sum_{i=0}^{p-1} r_i$. For the base
case, we can assume without loss of generality that $r_0=1$ and
$r_i=0$ for $i\ge 1$. We have:
\begin{eqnarray*}
 \lefteqn{\Delta_{b_0\mathbf{e_{1}}} f(x_1, \ldots, x_n)    }\\
& =& f(x_1+b_0, \ldots, x_n )-f(x_1, \ldots, x_n)
\\ & = & f(x_{10}b_0 + \ldots + x_{1, m-1} b_{m-1} + b_0, \ldots, x_n) -f(x_{10}b_0 + \ldots + x_{1, m-1}
b_{m-1}, \ldots, x_n) \\
& = & f((x_{10}+1)b_0 + \ldots + x_{1, m-1} b_{m-1}, \ldots, x_n)
-f(x_{10}b_0 +
\ldots + x_{1, m-1} b_{m-1}, \ldots, x_n)\\
\end{eqnarray*}
Hence
\begin{eqnarray*}
 \overline{\Delta_{b_0\mathbf{e_{1}}} f(x_1, \ldots, x_n)    }\\
 & = & \overline{f}(x_{10}+1, x_{11}, \ldots,
x_{n,m-1}) -\overline{f}(x_{10}, x_{11}, \ldots, x_{n,m-1})\\
& = & \Delta_{\mathbf{e^{(mn)}_{1}}}\overline{f}(x_{10},
x_{11}, \ldots, x_{n,m-1})\\
\end{eqnarray*}
The inductive step is similar, using  the fact that $\varphi$ is an
isomorphism and $\pi_j$ are homomorphisms. 
 \end{proof}

\subsection{Cube attack in $\GF(p^m)$}
The fundamental result of the cube attack can also be generalised to
$\GF(p^m)$. We will formulate it for differentiation w.r.t. one
variable.
\begin{theorem} \label{thm:cube-higher-field-result-evaluation}
Let $f:\GF(p^m)^n \rightarrow \GF(p^m)$ be a polynomial function of
degree $d_1$ in $x_1$. Write
\[
f(x_1, \ldots, x_n) = \sum_{i=0}^{d_1} g_i(x_2, \ldots, x_n) x_1^i
\]
Let $m_1\ge 1$ and define $t=x_1^{m_1}$. Let $j_1, \ldots, j_r$ be
all those integers between 0 and $d_1$ such that  $\S_p(j_w)\le
m_1$. Then
\[
f_t(0, x_2, \ldots, x_n) = \sum_{w=1}^{r} c_{j_w} g_{j_w}(x_2,
\ldots, x_n),
\]
where $c_{j_1}, \ldots, c_{j_r}$ are constants which only depend on
$h_{1}, \ldots, h_{m_1}$ (and do not depend on $ x_1, \ldots, x_n $)
given by:
\begin{equation}\label{eq:coeff}
c_{j_w} =
 \sum_{ (i_1, \ldots, i_{m_1}) \in C_p(j_w, j_w, m_1)}
\binom{j_w}{i_1, \ldots, i_{m_1}, 0} h_{1}^{i_1}\cdots
h_{m_1}^{i_{m_1}}
\end{equation}

\end{theorem}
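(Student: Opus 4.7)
The plan is to exploit the linearity of the finite difference operator together with the explicit formula from Theorem~\ref{thm:diff-formula}, then apply the ``evaluate at $x_1=0$'' step to isolate the constant terms.

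First I would observe that, for each fixed $h_j$, the operator $\Delta_{h_j\mathbf{e_1}}$ acts only on the $x_1$-component, so each $g_i(x_2,\ldots,x_n)$ passes through as a scalar. By linearity of $\Delta$ (and of its iterates), one gets
\[
f_t(x_1,\ldots,x_n) = \sum_{i=0}^{d_1} g_i(x_2,\ldots,x_n)\,\Delta^{(m_1)}_{h_1\mathbf{e_1},\ldots,h_{m_1}\mathbf{e_1}} x_1^{i}.
\]
Thus the proof reduces to the single-variable case $f(x_1)=x_1^{i}$. For $i<m_1$ Proposition~\ref{prop:deg-decreases-by-at least-one} forces the differential to vanish, so those terms contribute nothing.

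Next, for $i\ge m_1$ I would apply Theorem~\ref{thm:diff-formula}, which expresses the differential as a sum over $j=m_1,\ldots,i$ of monomials in $x_1^{i-j}$. Setting $x_1=0$ kills every term with $i-j>0$; only the free term, corresponding to $j=i$, survives. That free term is exactly
\[
\sum_{\substack{(i_1,\ldots,i_{m_1})\in\{1,\ldots,i-m_1+1\}^{m_1}\\ i_1+\cdots+i_{m_1}=i}} \binom{i}{i_1,\ldots,i_{m_1},0}\,h_1^{i_1}\cdots h_{m_1}^{i_{m_1}}.
\]

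Finally, working in characteristic $p$, I would invoke Corollary~\ref{cor:multinomial-not-zero} to discard the tuples whose multinomial coefficient is $0$ mod $p$. The surviving tuples are precisely those in $C_p(i,i,m_1)$, yielding the coefficient $c_i$ of formula~(\ref{eq:coeff}). Combinatorially, $C_p(i,i,m_1)\neq\emptyset$ exactly when one can split $i$ into $m_1$ positive parts with no carries in base $p$; this is equivalent to $\S_p(i)\ge m_1$, so indices $i$ not satisfying the digit-sum condition give $c_i=0$ and may be dropped (or kept harmlessly) from the sum. Reassembling via linearity gives $f_t(0,x_2,\ldots,x_n)=\sum_{w=1}^{r} c_{j_w}\,g_{j_w}(x_2,\ldots,x_n)$, as required.

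There is no serious obstacle here: the argument is essentially bookkeeping on top of Theorem~\ref{thm:diff-formula}. The only place demanding care is the ``$x_1=0$ kills every non-constant monomial'' step — one must confirm that the inner coefficient from Theorem~\ref{thm:diff-formula} is genuinely a scalar (independent of $x_1,\ldots,x_n$), which is immediate since it only involves the $h_\ell$ and binomial data — and the identification of the non-vanishing set $C_p(i,i,m_1)$ via Kummer's theorem.
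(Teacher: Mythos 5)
Your proof follows exactly the route the paper takes (its entire proof is the one line ``apply Theorem~\ref{thm:diff-formula} and the linearity of the $\Delta$ operator''), and your bookkeeping is sound: reducing to the monomials $x_1^i$ by linearity, discarding $i<m_1$ via Proposition~\ref{prop:deg-decreases-by-at least-one}, extracting the free term of Theorem~\ref{thm:diff-formula} at $x_1=0$, and identifying the surviving tuples with $C_p(i,i,m_1)$ via Corollary~\ref{cor:multinomial-not-zero}.

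There is, however, one point you should not have waved through with ``dropped (or kept harmlessly)''. Your own analysis shows that $C_p(i,i,m_1)=\emptyset$, hence $c_i=0$, precisely when $\S_p(i)<m_1$, so the sum you actually obtain runs over the indices with $\S_p(j)\ge m_1$. The theorem as printed indexes the sum by $\S_p(j_w)\le m_1$, which is the opposite inequality, and indices outside a set can not be ``kept harmlessly'' --- they are missing. Concretely, for $f=x_1^3$ over $\GF(4)$ and $m_1=1$ one computes $f_t(0)=h_1^3=c_3g_3$ with $\S_2(3)=2>1$, so the $\le$-indexed sum omits the only nonzero term and would give $0$. (The corollary immediately following the theorem, which retains exactly the indices with $\S_p(j_w)=m_1$ when $m_1$ is the digit-sum degree, confirms that the printed $\le$ is a typo for $\ge$.) Your derivation is the correct one; the flaw is that your final ``reassembling \ldots as required'' silently equates your index set with the (erroneous) one in the statement instead of flagging the discrepancy.
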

\begin{proof}
We apply Theorem~\ref{thm:diff-formula} and the linearity of the
$\Delta$ operator. 
 \end{proof}
\begin{corollary}
Let $f:\GF(p^m)^n \rightarrow \GF(p^m)$ be a polynomial function and
$d_1 = \deg_{x_1}(f)$. Write $f$ as:
\[
f(x_1, \ldots, x_n) = \sum_{i=0}^{d_1} g_i(x_2, \ldots, x_n) x_1^i.
\]
Let $m_1$ be the digit-sum degree of $f$ in $x_1$. Let $j_1, \ldots,
j_r$ be those integers between 0 and $d_1$ for which  $S_p(j_w)=m_1$
and $g_{j_w}\ne 0$. Then $m_1$ is the highest number of times that
we can differentiate $f$ w.r.t. $x_1$ before it becomes identically
zero. Moreover, for $t=x_1^{m_1}$ we have
\[
f_t(x_1, x_2, \ldots, x_n) = \sum_{w=1}^{r} c_{j_w} g_{j_w}(x_2,
\ldots, x_n),
\]
 where $c_{j_w}$ are as defined in (\ref{eq:coeff}).
\end{corollary}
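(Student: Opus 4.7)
The plan is to combine Theorem~\ref{thm:cube-higher-field-result-evaluation} with Corollary~\ref{cor:number-of-diff-before-zero}, using a short combinatorial analysis of the index set $C_p(j,j,m_1)$ to restrict the summation range and to establish the upper bound on the number of useful differentiations.

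First I would establish the ``highest number'' claim by showing that $\Delta^{(m_1+1)}_{h_1\mathbf{e_1},\ldots,h_{m_1+1}\mathbf{e_1}} f\equiv 0$ regardless of the choice of the $h_\ell$. By linearity of $\Delta$, it suffices to verify this for each monomial $f=x_1^j$ with $g_j\ne 0$, and every such $j$ satisfies $\S_p(j)\le m_1$ by definition of the digit-sum degree. Applying Theorem~\ref{thm:diff-formula}, every potentially non-vanishing contribution to $\Delta^{(m_1+1)} x_1^j$ arises from a tuple $(i_1,\ldots,i_{m_1+1})$ of positive integers which, together with a non-negative remainder $j-J$, sums to $j$ in base $p$ without carries (Corollary~\ref{cor:multinomial-not-zero}). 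Since each $i_s\ge 1$ contributes at least $1$ to the total digit sum, this would force $\S_p(j)\ge m_1+1$, contradicting $\S_p(j)\le m_1$.

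Next I would apply Theorem~\ref{thm:cube-higher-field-result-evaluation} to $f$ with $t=x_1^{m_1}$ to obtain $f_t(0,x_2,\ldots,x_n)=\sum c_{j_w} g_{j_w}(x_2,\ldots,x_n)$, where the sum runs over $j_w$ with $0\le j_w\le d_1$ and $\S_p(j_w)\le m_1$, and $c_{j_w}$ is given by~(\ref{eq:coeff}). The same no-carry analysis applied to $C_p(j_w,j_w,m_1)$ shows this index set is empty, and hence $c_{j_w}=0$, whenever $\S_p(j_w)<m_1$. Combined with the summation constraint $\S_p(j_w)\le m_1$ and discarding indices with $g_{j_w}=0$, only $j_w$ with $\S_p(j_w)=m_1$ and $g_{j_w}\ne 0$ contribute, matching exactly the claimed sum.

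Finally, Corollary~\ref{cor:number-of-diff-before-zero} applied to the digit-sum degree $m_1$ of $f$ in $x_1$ guarantees that $f_t$ does not depend on $x_1$, so $f_t(x_1,x_2,\ldots,x_n)=f_t(0,x_2,\ldots,x_n)$, extending the formula to the full domain as stated. The main obstacle is the little combinatorial lemma on $C_p$: one has to recognise that the positivity constraint $i_s\ge 1$ combined with the no-carry condition forces $\sum_s \S_p(i_s)\ge m_1$ (respectively $\ge m_1+1$ in the upper-bound step). Once this observation is in hand, the rest of the proof is straightforward linearity and assembly of earlier results.
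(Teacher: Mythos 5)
Your argument is correct and follows the route the paper clearly intends: the paper states this corollary without a written proof, relying on exactly the assembly you describe, namely Theorem~\ref{thm:cube-higher-field-result-evaluation}, the emptiness of $C_p(j,j,m_1)$ when $\S_p(j)<m_1$, and Corollary~\ref{cor:number-of-diff-before-zero} to remove the dependence on $x_1$. Your key combinatorial observation --- that membership of $(i_1,\ldots,i_{m_1})$ in $C_p(j,j,m_1)$ forces, via Corollary~\ref{cor:multinomial-not-zero}, $\S_p(j)=\sum_s\S_p(i_s)\ge m_1$ because each $i_s\ge 1$ --- is the right one, and it correctly disposes both of the terms with $\S_p(j_w)<m_1$ and of any $(m_1+1)$-st differentiation. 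One thing worth making explicit: your appeal to Theorem~\ref{thm:cube-higher-field-result-evaluation} is safe here only because $m_1$ is the digit-sum degree, so every $j$ with $g_j\ne 0$ already satisfies $\S_p(j)\le m_1$; the theorem's restriction of the sum to such $j$ silently discards monomials with $\S_p(j)>m_1$, which in general \emph{do} contribute to $f_t(0,x_2,\ldots,x_n)$, so the theorem cannot be quoted blindly outside the present hypotheses.

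There is one genuine gap: you prove only half of the claim that $m_1$ \emph{is} the highest number of possible differentiations. Showing $\Delta^{(m_1+1)}f\equiv 0$ for every choice of steps gives the upper bound, but equality also requires exhibiting $m_1$ admissible nonzero steps for which $f_t\not\equiv 0$, and your formula alone does not give this: the sum $\sum_w c_{j_w}g_{j_w}$ can vanish for the particular prescribed $h_\ell$. For instance, over $\GF(4)$ with $b_0=1$, the function $f=g(x_2)(x_1+x_1^2)$ has digit-sum degree $1$ in $x_1$, yet $\Delta_{\mathbf{e_1}}f=g+g=0$, while $\Delta_{\alpha\mathbf{e_1}}f=g\,\alpha(1+\alpha)\ne 0$. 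The paper does not close this either (Proposition~\ref{prop:h_i-choice-good} only treats specially constructed $f$). One way to repair it: the $c_{j_w}$, viewed as polynomials in $h_1,\ldots,h_{m_1}$, are nonzero, divisible by $h_1\cdots h_{m_1}$, of degree at most $p^m-1$ in each $h_s$, and have pairwise disjoint monomial supports (all monomials of $c_{j_w}$ have total degree $j_w$); hence $\sum_w c_{j_w}(h)g_{j_w}$ is a nonzero reduced polynomial in the $h_s$ and the remaining $x_i$, and a Schwartz--Zippel argument over $(\GF(p^m)\setminus\{0\})^{m_1}$ yields a valid choice of steps. Either supply such an argument or weaken the clause to the upper bound you actually prove.
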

Note that in the Corollary above $f_t$ can be evaluated at any
point, not necessarily at $x_1=0$ like in
Theorem~\ref{thm:cube-higher-field-result-evaluation}, because in
this case $f_t$ does not depend on $x_1$.

\begin{remark}
A cube attack for a polynomial function $f$ over $\GF(p^m)$ can be
developed based on
Theorem~\ref{thm:cube-higher-field-result-evaluation} (generalised
to several variables). However, by using
Theorem~\ref{thm:cube-attack-reduction}, such an attack can be
reduced to an attack in $\GF(p)$ on the polynomial functions
$\overline{f}_0, \ldots, \overline{f}_{m-1}$ simultaneously. In the
cube attack we are looking to differentiate $f$ so that the result
is linear in the secret variables. A polynomial function $f$ is
linear iff all the polynomial functions $\overline{f}_0, \ldots,
\overline{f}_{m-1}$ are linear. However, if we mount an attack in
$\GF(p)$ on $\overline{f}_0, \ldots, \overline{f}_{m-1}$
individually (rather than an attack translated from the attack in
$\GF(p^m)$) there are chances that some of the $\overline{f}_0,
\ldots, \overline{f}_{m-1}$ are linear, even if not all of them are
linear. This suggests that for functions $f$ over $\GF(p^m)$ an
attack in $\GF(p)$ on each component independently is more promising
than an attack in $\GF(p^m)$ on the whole function $f$. Therefore,
in Section~\ref{sec:algorithm} we only described the attack over
$\GF(p)$.
\end{remark}

\section{Conclusion}

We examined higher order differentiation over integers modulo a
prime $p$, as well as over general finite fields of $p^m$ elements,
proving a number of results applicable to cryptographic attacks, and
in particular generalising the fundamental theorem on which the cube
attack is based.

Using these results we proposed a generalisation of the cube attack
to functions over the integers modulo $p$; the main difference to
the binary case is that we can differentiate several times with
respect to the same variable. Such an attack would be particularly
suited to ciphers that use operations modulo $p$ in their internal
structure.

We also show that a further generalisation to general finite fields
$\GF(p^m)$ is possible, but not as promising as the generalisation
to $\GF(p)$, due to the fact that differentiation in $\GF(p^m)$ can
be reduced to differentiation in $\GF(p)$.

\bibliographystyle{plain}

\end{document}